\begin{document}

\title{Downlink Noncoherent Cooperation without Transmitter Phase Alignment}
\author{\authorblockN{Mingguang Xu, Dongning Guo, and Michael L.
    Honig}
  \authorblockA{\\Department of Electrical Engineering and Computer Science\\
  Northwestern University\\
    2145 Sheridan Road, Evanston, IL 60208 USA}
    \thanks{This work was
    supported by the NSF under grant CCF-0644344, the Army Research
    Office under grant W911NF-06-1-0339, DARPA under grant
    W911NF-07-1-0028, and a gift from Huawei. This paper was
    presented in part at the 2010 IEEE Global Communications Conference, Miami, Florida, USA,  December, 2010.}  }


%


\maketitle

\begin{abstract}
Multicell joint processing can mitigate inter-cell interference and
thereby increase the spectral efficiency of cellular systems. Most
previous work has assumed phase-aligned (coherent) transmissions
from different base transceiver stations (BTSs), which is difficult to achieve
in practice. In this work, a {\em noncoherent} cooperative
transmission scheme for the downlink is studied, which does not
require phase alignment. The focus is on jointly serving two users in adjacent cells sharing the same resource block.  The two BTSs partially share their messages through
a backhaul link, and each BTS transmits a superposition of two
codewords, one for each receiver. Each receiver decodes its own message, and
treats the signals for the other receiver as background noise. With
narrowband transmissions the achievable rate region and maximum
achievable weighted sum rate are characterized by optimizing the
power allocation (and the beamforming vectors in the case of multiple
transmit antennas) at each BTS between its two codewords. For a wideband
(multicarrier) system, a dual formulation of the optimal
power allocation problem across sub-carriers is presented, which can be
efficiently solved by numerical methods. Results show that the proposed
cooperation scheme can improve the sum rate substantially in the low
to moderate signal-to-noise ratio (SNR) range.
\end{abstract}

\begin{IEEEkeywords}
Multicell joint processing, cooperation, interference management, phase alignment, message sharing, wideband, power allocation.
\end{IEEEkeywords}

%

\section{Introduction}
\label{s:int}

In cellular networks where each base transceiver station (BTS)
independently transmits to mobile stations
within  its own cell, inter-cell interference is a major limitation on
the sum spectral efficiency. Rather than treating inter-cell
interference as noise, the modern view is that it can be exploited
by coordinating transmissions from the BTSs.
It is well-known that coordinated transmissions can potentially
increase the spectral efficiency dramatically (e.g., see
\cite{GesHan10JSAC,GjeGes08TWC,GesKou11IT,ChoAnd08TWC,LarJor08JSAC,Wyner94IT,
JafFos04EURASIP,FosHua05CISS,KarFos06TWC,AndCho07TWC,
SomZai07IT,MarFet11TWC,JinTse08EURASIP,SimSom08EURASIP,ZhaMeh08TWC,
CaiRam08Allerton,MarFet09GC,SimSom09EURASIP,BenHua09ICC}).
A recent comprehensive review of multicell coordination techniques
is given in \cite{GesHan10JSAC} and references therein.

There can be different levels of BTS coordination.
The basic level is to share channel state information (CSI)
for the direct and interfering channels among the BTSs.
That allows the BTSs to adapt their transmission strategies
to channel conditions jointly, and
includes inter-cell joint power control, user scheduling,
and beamforming \cite{GjeGes08TWC,GesKou11IT,ChoAnd08TWC}.
(See also \cite{ChaSez07Allerton,JorLar08TSP,LarJor08JSAC},
which consider power allocation and beamforming
for peer-to-peer (interference) networks.)
These techniques treat the inter-cell interference as noise,
but it is mitigated by exploiting the heterogeneity of CSI
across different users.

A higher level of coordination is multicell joint processing, which
requires the cooperating BTSs to exchange message data in addition
to CSI \cite{Wyner94IT,JafFos04EURASIP,FosHua05CISS,KarFos06TWC,AndCho07TWC,
SomZai07IT,MarFet11TWC,JinTse08EURASIP,SimSom08EURASIP,
ZhaMeh08TWC,CaiRam08Allerton,MarFet09GC,SimSom09EURASIP,BenHua09ICC}.
Interference can be mitigated by using ``virtual''
or ``network'' multiple-input multiple-output (MIMO)
techniques \cite{Wyner94IT,SomZai07IT,FosHua05CISS,KarFos06TWC,JinTse08EURASIP,CaiRam08Allerton,MarFet09GC, MarFet11TWC},
which view all interfering signals as carrying
useful information. 
Although multicell joint processing can potentially provide
substantial performance gains, it introduces a number of challenges.
In particular, most coordinated transmission schemes in the literature
not only require knowledge of codebooks and perfect CSI at all
transmitters and receivers, but also require the cooperating
transmissions to be aligned in phase so that transmissions superpose
coherently at the receivers. Phase-aligning oscillators at different
geographical locations is difficult, since small carrier
frequency offsets translate to large baseband phase
rotations~\cite{JunWir08ISWCS,MudBro09CM}.

This paper presents a {\em noncoherent} scheme for downlink
cooperation, which does not require phase alignment at the
transmitters. For simplicity, we consider a scenario where two BTSs cooperatively
transmit to two mobiles assigned the same time-frequency resource
block, one in each cell as depicted in Fig.~\ref{f:twocells}. It is assumed the two BTSs
partially or fully share their messages via a
bi-directional dedicated link. Each BTS may transmit a
superposition of two codewords, one for each receiver. Each receiver
decodes only its own message, and treats the undesired signals as
background noise. Assuming that Gaussian codebooks are used to
encode all messages, the proposed scheme is simple:
The message intended for each receiver is in general split into two
pieces to be transmitted by the two BTSs, respectively. The rate and
power allocations across the messages at each BTS are then
optimized. That is, for a given set of channel gains the available
power at each BTS is divided between a signal used to transmit its
own message and a signal used to transmit the message
from the other BTS.

This cooperative scheme is motivated by scenarios where each BTS
has no {\em a priori} information about the phase at the other BTS.
While the optimal (capacity-achieving) cooperative transmission
scheme is unknown, and seems to be difficult to determine, the
proposed rate-splitting scheme is a likely candidate.
Furthermore, it serves as a baseline for comparisons with
other schemes in which limited phase information may be obtained.

\begin{figure}
  \center
  \includegraphics[width=0.5\columnwidth]{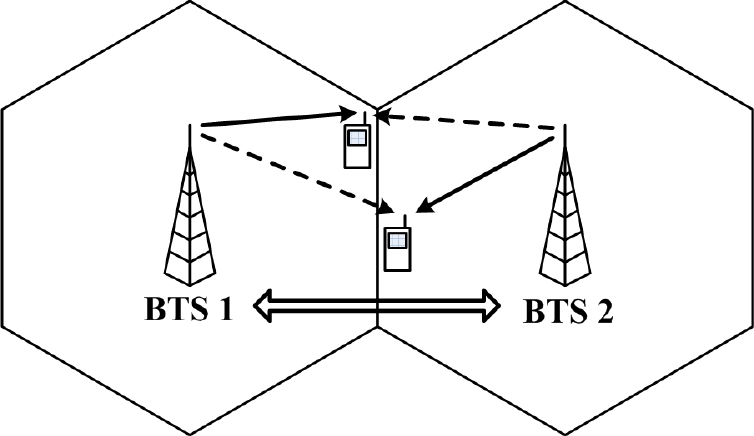}
  \caption{A scenario with two base transceiver stations and two mobiles.}
  \label{f:twocells}
\end{figure}

We optimize the powers allocated across the data streams
and associated beamformers with multiple transmit antennas with
cooperative transmissions for both narrowband and wideband scenarios.
For narrowband channels with a single transmit antenna
the frontier of the achievable rate region is computed by solving
a {\em linear-fractional program}. The weighted sum rate can be maximized
by comparing at most six extremal rate pairs in the constraint
set for transmit power. With multiple transmit antennas, the achievable
rate region can be characterized by maximizing the weighted sum rate
over the allocated power and the beamforming vector for each message,
and the resulting optimization problem can be solved efficiently by
numerical methods. This noncoherent cooperative scheme often achieves
a significantly larger rate region and much higher sum rate
than non-cooperative schemes.

With wideband (frequency-selective) channels, the power is allocated
over multiple sub-carriers. Maximizing the sum rate is in general a
non-convex problem. Under mild assumptions, however, the dual problem
can be solved efficiently. Moreover, we propose a suboptimal power allocation scheme for the case of a single transmit antenna, which admits
a simple analytical solution. This suboptimal scheme performs almost
as well as the optimal power allocation when the direct- and cross-channel
gains are of the same order.

The optimization problems presented can be easily extended
to more than two BTSs and two mobiles.
However, the structure of the solution becomes
more complicated, necessitating general numerical (convex programming)
techniques. Here we focus on the scenario with two mobiles in
adjacent cells since in practice a particular mobile is likely
to have only one relatively strong interferer,
and coordinating among more than two mobiles
across cells becomes quite complicated.
(This complication can be compounded by the scheduler, which may reassign
nearby mobiles different time-frequency resources over successive frames.)
Finally, the two-mobile scenario provides significant
insight into the potential gains of the cooperative scheme.

\section{Narrowband Cooperation Model}
\label{s:narrow}
Consider downlink transmission in two adjacent cells each with the same
set of narrowband channels. Within each cell, the signals from the BTS to its associated
different mobiles occupy non-overlapping time-frequency resources; however,
in each time-frequency slot, there can be inter-cell interference. Here we
consider two mobiles in adjacent cells assigned the same narrowband channel.
Assuming a narrowband system with block fading and single transmit antenna
at each BTS and single receive antenna at each mobile, the baseband signal
received by mobile $j\!=\!1,2$ during the $n$-th symbol interval is
\begin{equation}
y_j{(n)}=\sqrt{g_{j1}}e^{i\theta_{j1}(n)}x_{1}{(n)}+\sqrt{g_{j2}}e^{i\theta_{j2}(n)}x_{2}{(n)}+z_j{(n)}\label{eq:recsig1},
\end{equation}
where $i^2\!=\!-1$, $g_{jk}$ denotes the positive block fading gain from BTS $k$ to mobile $j$, $x_k{(n)}$, for $k\!=\!1,2$, denotes the transmitted signal from
BTS $k$ at the $n$-th symbol interval, $\theta_{jk}(n)$ denotes the phase of the fading channel from BTS $k$ to mobile $j$, and $z_j{(n)}$ denotes the noise at mobile
$j$, which is a sample from a sequence of independent, unit-variance
circularly symmetric complex Gaussian (CSCG) random variables.
\begin{table}
  \caption{Knowledge of CSI at each terminal.}
  \begin{center}
  \label{tCSI}
    \begin{tabular}{c|cccc|cccc}
            & $g_{11}$ & $g_{21}$ & $g_{12}$ & $g_{22}$ & $\theta_{11}(n)$ & $\theta_{21}(n)$ & $\theta_{12}(n)$ & $\theta_{22}(n)$\\
      \hline
      BTS 1 & Y        & Y        & Y        & Y        &             &            & N             & N            \\
      \hline
      BTS 2 & Y        & Y        & Y        & Y        & N             & N             &             &          \\
      \hline
      mobile 1 & Y        &         & Y        &        & Y             &             & Y             &            \\
      \hline
      mobile 2 &         & Y        &        & Y        &             & Y             &             & Y            \\
      \hline
    \end{tabular}
  \end{center}
\end{table}

It is important to specify what channel state information is known to which
transmitters and/or receivers.  The block fading gains
$(g_{11},g_{21},g_{12},g_{22})$ are known to both BTSs.  The gains $(g_{j1},g_{j2})$ are known
to the corresponding receiver $j$.  Usually, these gains are measured by
the receiver and sent back to the transmitters through some feedback link.
Whether $(g_{j1},g_{j2})$ are known to the other receiver is inconsequential
in this study.  The phases $\big(\theta_{j1}(n),\theta_{j2}(n)\big)$ are known or can be acquired by mobile $j$.
Phases $\big(\theta_{11}(n),\theta_{21}(n)\big)$ from BTS 1 are unknown to BTS 2. Likewise, phases $\big(\theta_{12}(n),\theta_{22}(n)\big)$ are unknown to BTS 1. In fact, due to the frequency offset between the two oscillators at the two BTSs, the phase difference $\theta_{j1}(n)-\theta_{j2}(n)$ varies rapidly with $n$ at each mobile $j$. This prohibits one BTS to track the phases originating from the other BTS. Hence coherent combining at the receiver is not feasible. Since receiver $j$ can compensate for phase $\theta_{jj}(n)$, it can be assumed without loss of generality that $\theta_{11}(n)\!=\!\theta_{22}(n)\!=\!0$, where $\theta_{21}(n)$ and $\theta_{12}(n)$ denote the rapidly varying phase differences. The preceding assumptions are summarized in Table \ref{tCSI}, where an entry ``Y'' (respectively ``N'') means the CSI in the corresponding column is known (respectively unknown) to the terminal in the corresponding row, and an empty entry means whether the corresponding CSI is known to the terminal is inconsequential.


It is assumed that a dedicated link between the two BTSs
allows sharing of their messages, and allocation of powers across streams and code rates may be determined at a BTS or a separate radio control node. Each BTS is subject to its own power constraint.

The proposed cooperation consists of three techniques: 

\subsubsection{Message Sharing and Rate-Splitting}
Each BTS has a message for its assigned mobile, and each BTS may split its
message into two parts, where one part is to be transmitted by itself, and the other
part is shared with and transmitted by the other BTS. This splits the data
stream intended for each mobile. This implies partial message sharing across
the two BTSs, which reduces the burden on the backhaul link relative
to full message sharing.
\subsubsection{Superposition Coding}
Each BTS has its own message intended for its assigned mobile, and
possibly also the shared message from the other BTS.  The two
messages are encoded separately.  Let the message transmitted by BTS
$k$ and intended for mobile $j$ be encoded as $\big(x_{jk}{(1)}, \ldots,
x_{jk}{(N)}\big)$ so that the superposition $x_k{(n)}\!=\!x_{1k}{(n)}\!+\!x_{2k}{(n)}$ is transmitted by BTS $k$.  Then the received signal
by mobile $j\!=\!1,2$ during the $n$-th symbol interval can be rewritten as
\begin{equation}
    y_j{(n)}=\sqrt{g_{j1}}e^{i\theta_{j1}(n)}\big[x_{11}{(n)}+x_{21}{(n)}\big]+\sqrt{g_{j2}}e^{i\theta_{j2}(n)}\big[x_{12}{(n)}+x_{22}{(n)}\big]+z_j{(n)}\label{eq:recsig}
\end{equation}
and the per-BTS power constraints can be stated as
\begin{equation}
\frac{1}{N}\sum_{n=1}^{N}\big[|x_{1k}(n)|^2+|x_{2k}(n)|^2\big]\leq
P_k,~k=1,2.\label{eq:powercon}
\end{equation}

\subsubsection{Interference Cancellation}
Each mobile receives the desired signal, an interference signal, and
noise. Both the desired and interference signals may come from two BTSs.
Here we assume that each mobile treats the
interference signal as noise and does not attempt to decode the
messages intended for the other mobile, since one mobile may not be aware of the
modulation and coding scheme of the signals intended for the other mobile.\footnote{This assumption is consistent with the LTE/LTE-Advanced standards currently under development \cite{SesTou11}.}

Each mobile decodes its two messages, one from each BTS, possibly
using successive decoding, i.e., it can
first decode the message from one BTS and completely cancel the self-interference
when decoding the message from the other BTS. With this scheme the
two-transmitter two-receiver channel can be viewed as two mutually
interfering multiaccess channels (MACs), where each MAC consists
of one mobile receiver and both BTS transmitters.

Throughout this paper, we assume standard Gaussian codebooks
(although those may not be optimal in this scenario), which do not require
phase synchronization among the two cooperating transmitters.

\section{Narrowband Rate Region}
\subsection{The Rate Region Frontier}
Given the channel gains $\boldsymbol{g}\!=\!(g_{11},g_{21},g_{12},g_{22})$ and power constraints $\boldsymbol{P}\!=\!(P_1,P_2)$, the achievable rate region is defined as the convex hull of the following region
\begin{align}
\mathcal{R}(\boldsymbol{g},\boldsymbol{P})\triangleq
\bigg\{&(R_1,R_2)\bigg\arrowvert
\begin{matrix}
R_1=\log\big(1+\frac{g_{11}P_{11}+g_{12}P_{12}}{1+g_{11}P_{21}+g_{12}P_{22}}\big)
\\R_2=\log\big(1+\frac{g_{21}P_{21}+g_{22}P_{22}}{1+g_{21}P_{11}+g_{22}P_{12}}\big)
\end{matrix}\label{eq:Rk}\\
&\text{for some}~P_{11}, P_{12}, P_{21}, P_{22}\!\geq\!0~\text{satisfying}~P_{11}\!+\!P_{21}\!\leq\!
P_1~\text{and}~P_{12}\!+\!P_{22}\!\leq\!P_2\bigg\},\notag
\end{align}
where $P_{jk}$ is the power BTS $k$ allocates to mobile $j$. (It is not difficult to check that $\mathcal{R}(\boldsymbol{g},\boldsymbol{P})$ is indeed a region.) To find the frontier of $\mathcal{R}(\boldsymbol{g},\boldsymbol{P})$, we maximize the
rate of user 2 for a given rate of user 1.  By sweeping the
rate of user 1 over all possible values, we obtain the trade-off
between the two users' rates (without time sharing), which is
referred to as the rate region frontier. From the rate expression
(\ref{eq:Rk}), the optimization problem is written as:
\begin{subequations}
\label{eq:fracLP}
\begin{align}
\underset{\{P_{jk}\}}{\operatorname{\text{maximize}}}~~&R_2=
\log\bigg(1+\frac{g_{21}P_{21}+g_{22}P_{22}}{1+g_{21}P_{11}+g_{22}P_{12}}\bigg)\label{eq:rate}\\
\text{subject~to}~~&\log\bigg(1+\frac{g_{11}P_{11}+g_{12}P_{12}}{1+g_{11}P_{21}+g_{12}P_{22}}\bigg)=R_1\label{eq:con1}\\
&P_{1k}+P_{2k}\leq P_k,~~~P_{jk}\geq 0,~j,k=1,2.\label{eq:BTScon}
\end{align}
\end{subequations}

Since $\log(\cdot)$ is an increasing function, the
rate objective (\ref{eq:rate}) is equivalently the argument of the
logarithm function. The constraint (\ref{eq:con1}) is
linear in the variables. Hence this optimization problem
is a \emph{linear-fractional program} \cite{BoyVan04}.
Specifically, letting $Z\!=\!1\!+\!g_{21}P_{11}\!+\!g_{22}P_{12}$ and
\begin{equation}
P_{jk}=\frac{\widetilde{P}_{jk}}{Z},~j,k=1,2,\label{eq:optpower}
\end{equation}
Problem (\ref{eq:fracLP}) can be rewritten as the following
equivalent linear program:
\begin{subequations}
\label{eq:LP}
\begin{align}
\underset{\{\widetilde{P}_{jk}\},~Z}{\operatorname{\text{maximize}}}~~&g_{21}\widetilde{P}_{21}+g_{22}\widetilde{P}_{22}\\
\text{subject~to}~~&g_{11}\widetilde{P}_{11}+g_{12}\widetilde{P}_{12}-(2^{R_1}-1)(g_{11}\widetilde{P}_{21}+g_{12}\widetilde{P}_{22}+Z)=0\\
&\widetilde{P}_{1k}+\widetilde{P}_{2k}-P_kZ\leq 0,~k=1,2\\
&g_{21}\widetilde{P}_{11}+g_{22}\widetilde{P}_{12}+Z=1,~~~\widetilde{P}_{jk},Z\geq 0,~j,k=1,2.
\end{align}
\end{subequations}
From \cite[Ch. 4]{BoyVan04}, it can be shown that the optimal $Z\!>\!0$.
The optimal power $P_{jk}$ in Problem (\ref{eq:fracLP}) is given by
(\ref{eq:optpower}), where $\widetilde{P}_{jk}$ and $Z$
are obtained from the linear
program in (\ref{eq:LP}), which can be solved
efficiently using standard techniques \cite{BoyVan04}.

The rate region frontier can then be computed by solving a family of
linear-fractional programs corresponding to sweeping over the valid
range of $R_1$. It is easy to show that the valid range of $R_j$
is the interval $\big[0,\log(1\!+\!g_{j1}P_1\!+\!g_{j2}P_2)\big]$.

\begin{figure}
  \center
  \includegraphics[width=0.5\textwidth]{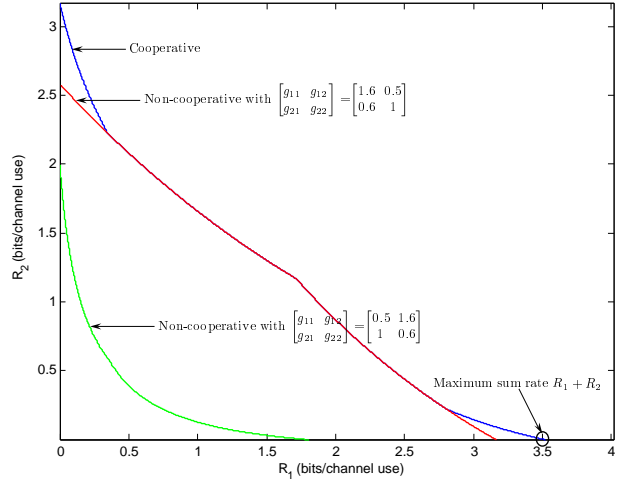}
  \caption{The rate region frontiers achieved by cooperative and non-cooperative schemes with $P_1\!=\!P_2\!=\!5$.}
\label{f:narrow}
\end{figure}
Fig.~\ref{f:narrow} illustrates the rate region frontiers achieved by
the cooperative scheme described in \prettyref{s:narrow}, and by
joint power control (without data sharing) as proposed
in~\cite{ChaSez07Allerton}. Two examples are shown. In the first,
the direct-channel gains are stronger than the cross-channel gains.
The cooperative scheme achieves noticeable gains over the
non-cooperative scheme only when $R_1$ or $R_2$ is near zero.
In the second example, the
direct- and cross-channel gains are swapped, so that the cross-channel
gains are stronger.\footnote{This scenario may not be practical for narrowband systems, because the mobile would automatically switch to the BTS from which it receives stronger signal. However, this serves as a basis for the study of wideband systems in the next section.} The rate region achieved without cooperation
becomes much smaller, while the rate region achieved with cooperation
remains the same. Hence the gain due to cooperation in this scenario
is mainly due to BTS selection.

The following lemma characterizes the optimal power allocation
for points on the rate region frontier. We first observe that at least
one of the constraints in (\ref{eq:BTScon}) must be binding, because
otherwise increasing all $\{P_{jk}\}$ proportionally increases both
$R_1$ and $R_2$.
\newtheorem{lemma}{Lemma}
\begin{lemma}
Every $(R_1,R_2)$ on the rate region
frontier is achieved by a power allocation that satisfies
\begin{equation}
\begin{cases}
P_{1k}+P_{2k}=P_k,~&\text{if}~(2^{R_1}\!-\!1)(2^{R_2}\!-\!1)\leq1\\
P_{1k}P_{2k}=0,~&\text{if}~(2^{R_1}\!-\!1)(2^{R_2}\!-\!1)>1
\end{cases}
\end{equation}
for $k\!=\!1,2$.
\end{lemma}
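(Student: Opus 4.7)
The plan is to prove the lemma by a local perturbation argument around an optimal power allocation, keeping $R_1$ fixed and tracking how $R_2$ changes. Write $\alpha=2^{R_1}-1$ and $\beta=2^{R_2}-1$, and let $u=g_{11}P_{21}+g_{12}P_{22}$ and $v=g_{21}P_{11}+g_{22}P_{12}$, so that $\alpha=(g_{11}P_{11}+g_{12}P_{12})/(1+u)$ and $\beta=(g_{21}P_{21}+g_{22}P_{22})/(1+v)$. The key building block is the following perturbation at BTS~1: assuming $P_{11},P_{21}>0$, set $P_{11}\mapsto P_{11}(1+\epsilon_1)$ and $P_{21}\mapsto P_{21}(1+\epsilon_2)$ with $\epsilon_1=(\alpha P_{21}/P_{11})\epsilon_2$. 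A direct calculation shows that this keeps $\alpha$ (hence $R_1$) fixed, changes the total BTS~1 power by $(1+\alpha)P_{21}\epsilon_2$, and renders $\beta$ a linear-fractional function of $\epsilon_2$ with
\begin{equation*}
\left.\frac{d\beta}{d\epsilon_2}\right|_{\epsilon_2=0}=\frac{g_{21}P_{21}}{1+v}(1-\alpha\beta).
\end{equation*}
Because $\beta(\epsilon_2)$ is linear-fractional, the sign of $1-\alpha\beta$ governs monotonicity throughout the feasible range of $\epsilon_2$. A symmetric perturbation at BTS~2 using $\epsilon_1'=(\alpha P_{22}/P_{12})\epsilon_2'$ preserves $R_1$ and yields a derivative with the same sign factor $(1-\alpha\beta)$.

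For the case $\alpha\beta>1$, I would take $\epsilon_2<0$ in the BTS~1 perturbation: $\beta$ then strictly increases, while the total BTS~1 power decreases, so the perturbation is feasible. Decreasing $\epsilon_2$ until $\min\{P_{11}(1+\epsilon_1),\,P_{21}(1+\epsilon_2)\}=0$ yields a feasible allocation with strictly larger $\beta$, contradicting optimality unless $P_{11}P_{21}=0$ to begin with. The symmetric argument at BTS~2 gives $P_{12}P_{22}=0$.

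For the case $\alpha\beta\leq 1$, the observation preceding the lemma guarantees that at least one power constraint binds. Suppose instead that, say, BTS~2 has slack. The BTS~2 perturbation with $\epsilon_2'>0$ is feasible for small $\epsilon_2'$, keeps $\alpha$ fixed, and produces $d\beta\geq 0$; when $\alpha\beta<1$ the inequality is strict, contradicting optimality and forcing both power constraints to be tight. When $\alpha\beta=1$ the same perturbation leaves $\beta$ unchanged, so one may enlarge the BTS~2 power until the constraint binds while remaining on the frontier. The hardest part will be handling the boundary cases where some $P_{jk}$ vanishes and the multiplicative perturbation is not well defined. In each such case I would replace the scaling by an additive perturbation enforcing the analogous rate-preserving relation (for example, $dP_{12}=\alpha\,dP_{22}$ when starting from $P_{12}=0$), which produces the same formula $d\beta\propto(1-\alpha\beta)$ and leads to the same conclusion.
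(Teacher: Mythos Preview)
Your argument is correct and is essentially the paper's own proof: both fix $R_1$ by perturbing $(P_{1k},P_{2k})$ in the ratio $dP_{1k}=\alpha\,dP_{2k}$ with $\alpha=2^{R_1}-1$, and then read off the sign of the change in $R_2$ from $1-\alpha\beta$ to reach a contradiction in each case. The only cosmetic difference is that you parameterize the perturbation multiplicatively, which creates the boundary-case nuisance you flag; the paper works additively from the outset (exactly your suggested fix $dP_{12}=\alpha\,dP_{22}$), so those cases never arise.
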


The proof is given in the appendix.
If $R_1$ or $R_2$ is small enough such that
$(2^{R_1}\!-\!1)(2^{R_2}\!-\!1)\!\leq1\!$, then both BTSs transmit with full
power. There may be points on the rate region frontier that satisfy
$(2^{R_1}\!-\!1)(2^{R_2}\!-\!1)\!>\!1$, which implies $R_1\!>\!0$ and $R_2\!>\!0$.
Each BTS then either transmits only its own message or only the shared message
from the other BTS. These observations provide an easy way to compute
the maximum weighted sum rate, as shown in the next section.

\subsection{Weighted Sum Rate Maximization}
Consider the problem of maximizing the
{\em weighted} sum rate
\begin{equation}
    R(\mu)\triangleq R_1+\mu R_2,\label{eq:sumrate}
\end{equation}
where the user rates are given by (\ref{eq:Rk}) and
$\mu\!\geq\!0$ is the relative priority assigned to the second mobile
and allows for a tradeoff between the overall system throughput and
user fairness. For given $\mu$, the maximum of $R(\mu)$ is always
achieved by some rate pair on the rate region frontier.  For every
$\mu$, (\ref{eq:sumrate}) describes a straight line in the $(R_1,R_2)$
plane, which is an outer bound on the rate region.  The intersection
of the regions below all such outer bounds is exactly the achievable
rate region, which is the convex hull of the region under the rate
region frontier. Any rate pair in this rate region can be achieved
by time sharing two rate pairs on the rate region frontier found
in Section III.A.

\newtheorem{prop}{Proposition}
\begin{prop}
\label{pr:miueq1}
The maximum of $R_1\!+\!R_2$ is achieved at
one of the four corner points listed in Table~\ref{t}.
\end{prop}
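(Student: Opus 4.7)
The plan is to combine Lemma~1 with a direct case analysis to reduce the set of candidate sum-rate maximizers to four ``corner'' allocations in which each BTS transmits at full power on a single stream. Since $\max(R_1+R_2)$ is attained on the rate-region frontier, Lemma~1 applies. In its first case both per-BTS power constraints are binding, so substituting $P_{21}=P_1-P_{11}$ and $P_{22}=P_2-P_{12}$ into (\ref{eq:Rk}) gives
\begin{equation*}
R_1+R_2=\log\frac{(1+g_{11}P_1+g_{12}P_2)(1+g_{21}P_1+g_{22}P_2)}{D(P_{11},P_{12})},
\end{equation*}
where
\begin{equation*}
D(P_{11},P_{12})=\bigl(1+g_{11}(P_1-P_{11})+g_{12}(P_2-P_{12})\bigr)\bigl(1+g_{21}P_{11}+g_{22}P_{12}\bigr).
\end{equation*}
The numerator is constant, so maximizing the sum rate is equivalent to minimizing $D$ over $[0,P_1]\times[0,P_2]$. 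Since $\partial^2 D/\partial P_{11}^2=-2g_{11}g_{21}\le 0$ and $\partial^2 D/\partial P_{12}^2=-2g_{12}g_{22}\le 0$, $D$ is concave in each coordinate separately. Minimizing first in $P_{11}$ forces $P_{11}\in\{0,P_1\}$; the pointwise minimum $\min(D(0,P_{12}),D(P_1,P_{12}))$ is again concave in $P_{12}$ as the minimum of two concave functions, so its minimum forces $P_{12}\in\{0,P_2\}$. Hence the Case~1 optimum lies at one of the four rectangle corners.

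In the second case each BTS transmits only a single stream, giving four subpatterns. For the two subpatterns in which both BTSs serve the same mobile, the sum rate is a monotone increasing logarithm in the active powers, so the optimum sits at full power and yields two of the corner allocations. For the ``direct'' subpattern $P_{12}=P_{21}=0$, proportional scaling of the two active powers $(P_{11},P_{22})$ strictly increases both $R_1$ and $R_2$, so at the optimum either $P_{11}=P_1$ or $P_{22}=P_2$. On the edge $P_{11}=P_1$ with $t:=P_{22}\in[0,P_2]$, the sum rate becomes
\begin{equation*}
f(t)=\log\Bigl(1+\tfrac{g_{11}P_1}{1+g_{12}t}\Bigr)+\log\Bigl(1+\tfrac{g_{22}t}{1+g_{21}P_1}\Bigr).
\end{equation*}
I would rule out an interior maximum by showing that the simultaneous conditions $f'(0)>0$ and $f'(P_2)<0$ imply, after cross-multiplication and cancellation of common factors, the inequality $g_{12}P_2(2+g_{12}P_2)<0$, which is impossible. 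Hence the maximum of $f$ on $[0,P_2]$ is at an endpoint: $t=P_2$ gives the direct corner, while $t=0$ gives $(P_1,0,0,0)$, which is strictly dominated by the corner with all power to mobile~1. The symmetric edge and the ``swapped'' subpattern are handled in the same way, producing the remaining two corner allocations.

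The main technical obstacle is ruling out interior maxima of $f$ on the direct and swapped edges in Case~2, since $f$ is neither globally convex nor globally concave in $t$ and could in principle carry a unique interior maximum. The cross-multiplication argument collapses the two one-sided first-order conditions into a single contradiction in $g_{12}P_2\ge 0$, which completes the edge analysis. Once this reduction is in place, the remainder of the proof is a routine enumeration and comparison of the four corner allocations produced by Cases~1 and~2 and their identification with the entries of the table.
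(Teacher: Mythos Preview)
Your Case~1 argument is correct and arguably cleaner than the paper's: the paper shows that each partial derivative $\partial R(\mu)/\partial P_{jk}$ is increasing in $P_{jk}$ and hence the optimum is at a vertex, which is equivalent to your observation that $D$ is coordinatewise concave so its minimum over the rectangle is at a corner.

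In Case~2 there is a genuine logical gap. Ruling out the single sign pattern $f'(0)>0$ and $f'(P_2)<0$ does \emph{not} by itself force the maximum to an endpoint: you must also exclude, for instance, $f'(0)\le 0$ with an interior hump (down--up--down). Your algebra is actually fine---clearing denominators in $f'(t)=0$ gives a quadratic $Q(t)$ in $t$ with the same sign as $f'(t)$, and one computes
\[
Q(t)=g_{22}g_{12}^{2}t^{2}+2g_{22}g_{12}\,t+Q(0),
\]
so $Q(P_2)-Q(0)=g_{22}\,g_{12}P_2(2+g_{12}P_2)\ge 0$, which is precisely your contradiction. But the right way to use this is to note that $Q'(t)=2g_{22}g_{12}(1+g_{12}t)>0$, i.e.\ $Q$ is \emph{strictly increasing} on $[0,\infty)$. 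Hence $f'$ has at most one sign change on $[0,P_2]$, and that change can only be from negative to positive; $f$ is therefore monotone or U-shaped and its maximum is at an endpoint. Without this monotonicity step the ``Hence'' in your write-up is unjustified.

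This is exactly the mechanism the paper exploits, phrased slightly differently: on the edge $P_{22}=P_2$ (varying $P_{11}$), the paper writes the stationarity condition as a quadratic $a_1P_{11}^2+b_1P_{11}+c_1=0$ with $a_1>0$, observes that the \emph{smaller} root is the only candidate local maximum, and notes that for $\mu=1$ the linear coefficient satisfies $b_1=2g_{11}g_{21}>0$, forcing that smaller root to be negative. Your positive linear coefficient $2g_{22}g_{12}$ is the analogue on the other edge. So once you add the monotonicity remark, your argument and the paper's coincide; the paper just packages both propositions together and reads off the $\mu=1$ case from the sign of $b_1$.
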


The four corner points of the power constraint set shown in
Table~\ref{t} correspond to {\em full cooperation},
meaning that both BTSs cooperatively transmit with full power to
one mobile or each BTS only transmits the shared message from the
other BTS, and {\em non-cooperation}, meaning that each BTS transmits
to its own mobile with full power and without rate-splitting.

\begin{table}
  \caption{Corner points and possible stationary points for weighted sum rate maximization.}
  \begin{center}
  \label{t}
    \begin{tabular}{c|c|cccc}
      &      & $P_{11}$ & $P_{21}$ & $P_{12}$ & $P_{22}$ \\
      \hline
      \multirow{4}{*}{corner points} &
      \multirow{3}{*}{full cooperation} &  $P_1$ &  $0$ &  $P_2$ &  $0$ \\
      \cline{3-6}
      & &  $0$ &  $P_1$ &  $0$ &  $P_2$ \\
      \cline{3-6}
      && $0$ &  $P_1$ &  $P_2$ &  $0$\\
      \cline{2-6}
      &\multirow{1}{*}{non-cooperation} & $P_1$ &  $0$ &  $0$ &  $P_2$ \\
      \hline
      \multirow{4}{*}{stationary points} &
      \multirow{2}{*}{$\mu>1$} &  $P_{11}^*$ &  $0$ &  $0$ &  $P_2$ \\
      \cline{3-6}
      &&  $0$ &  $P_1$ &  $P_{12}^*$ &  $0$ \\
      \cline{2-6}
      &\multirow{2}{*}{$\mu<1$} & $P_1$ &  $0$ &  $0$ &  $P_{22}^*$ \\
      \cline{3-6}
      && $0$ &  $P_{21}^*$ &  $P_2$ &  $0$ \\
      \hline
    \end{tabular}
  \end{center}
\end{table}

\begin{prop} \label{pr:stationary}
If $\mu\!\neq\!1$, then the power allocation,
which maximizes $R_1\!+\mu\!R_2$ satisfies
$P_{1k}P_{2k}\!=\!0$ for $k\!=\!1,2$,
and has the form shown in Table~\ref{t} where $P_{jk}^*\!<\!P_k$.
\end{prop}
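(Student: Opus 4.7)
The plan is to combine the structural result of Lemma~1 with standard KKT optimality conditions, and then exhaust the resulting case analysis. Since the optimal rate pair lies on the frontier of $\mathcal{R}(\boldsymbol{g},\boldsymbol{P})$, Lemma~1 already partitions the candidates into a \emph{low-rate} regime where both BTSs transmit at full power with possibly nonzero splits, and a \emph{high-rate} regime where $P_{1k}P_{2k}=0$ is automatic. The burden is therefore to show that when $\mu\neq 1$ the weighted sum rate cannot be maximized strictly inside the low-rate regime, and then to identify the precise structure of the optimum within the high-rate regime.

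The core step is a one-parameter perturbation at each BTS: replacing $(P_{1k},P_{2k})$ by $(P_{1k}+\delta,P_{2k}-\delta)$ preserves both the per-BTS budget and the full-power condition. I would compute $d(R_1+\mu R_2)/d\delta$ at $\delta=0$; after cancellation it reduces to the clean expression $g_{1k}/(1+I_1)-\mu g_{2k}/(1+I_2)$, where $I_j$ denotes the interference power at mobile $j$. Requiring this derivative to vanish at both $k=1$ and $k=2$ forces, upon dividing the two resulting equations, the rank-deficiency identity $g_{11}g_{22}=g_{12}g_{21}$, which I would set aside as a measure-zero channel condition. Hence at least one BTS, say BTS~1, must satisfy $P_{11}P_{21}=0$ at the optimum, and a slightly more delicate follow-up that combines Lemma~1 with a sign check on the remaining perturbation derivative at BTS~2 extends the conclusion to $P_{12}P_{22}=0$.

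Once $P_{1k}P_{2k}=0$ for both $k$, the support reduces to one of four configurations. The two degenerate ones (both BTSs serving the same mobile) force one rate to zero and are trivially handled. For the non-cooperation and swap configurations I would parametrize by the two remaining powers and apply KKT on $[0,P_1]\times[0,P_2]$. When $\mu>1$, a direct sign check on the partial derivatives shows that the power of the BTS serving mobile~2 is pushed to the boundary $P_k$ while the other BTS settles at a unique interior stationary value $P_{jk}^{*}\in(0,P_k)$, yielding exactly the two $\mu>1$ rows of Table~\ref{t}. The symmetric analysis for $\mu<1$ gives the remaining rows.

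The main obstacle is the first stage: the non-concavity of $R_1+\mu R_2$ rules out any direct convex-analytic shortcut, and the rank-deficiency identity together with a genericity assumption on the channel gains is needed to exclude a mixed optimum in the low-rate regime. Bridging between ``one BTS has $P_{1k}P_{2k}=0$'' and ``both BTSs have $P_{1k}P_{2k}=0$'' also requires invoking Lemma~1 rather than pure first-order analysis. The subsequent enumeration is routine bookkeeping with standard KKT conditions.
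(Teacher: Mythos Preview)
Your overall architecture—invoke Lemma~1 to split into the full-power regime and the $P_{1k}P_{2k}=0$ regime, then analyze each—matches the paper. Your treatment of the second regime (enumerate the four support patterns, reduce to a single free power on each candidate frontier piece, locate the stationary point, and separate $\mu>1$ from $\mu<1$) is essentially what the paper does, though the paper pins down the $\mu$-dependence not by a ``direct sign check'' but by observing that the linear coefficient $b_1$ of the resulting quadratic in $P_{11}$ satisfies $b_1>0$ whenever $\mu\le 1$, which forces the smaller root (the local maximizer) to be negative; the symmetric coefficient check handles the $P_{22}$-parametrization and the swap configuration.

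The substantive difference is in the full-power regime. Your first-order perturbation correctly shows that an optimum in the \emph{relative interior} of the face $\{P_{1k}+P_{2k}=P_k\}$ would force $g_{11}g_{22}=g_{12}g_{21}$. But the step you flag as the ``delicate follow-up''—excluding an optimum on an \emph{edge} of that face (one BTS at a corner, the other with $0<P_{12}<P_2$)—cannot be done by a first-order sign check: the remaining stationarity equation $g_{12}/(1+I_1)=\mu g_{22}/(1+I_2)$ generically has a solution in $(0,P_2)$, and combining it with the KKT inequality at the cornered BTS only yields a condition like $g_{11}g_{22}\ge g_{12}g_{21}$, not a contradiction. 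What actually closes this is second-order: on the full-power face one computes
\[
\frac{\partial^2 R(\mu)}{\partial P_{1k}^2}=\frac{g_{1k}^2}{(1+I_1)^2}+\frac{\mu\, g_{2k}^2}{(1+I_2)^2}>0,
\]
so $R(\mu)$ is strictly convex in each of $P_{11}$ and $P_{12}$ separately, and the maximum over the rectangle $[0,P_1]\times[0,P_2]$ is forced to a vertex. This is exactly the paper's argument, and once you have it, both your rank-deficiency step and the genericity assumption become unnecessary. In short, replace your first-order analysis of the full-power regime by this coordinate-wise convexity observation, and the rest of your outline goes through.
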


\begin{proof}
The proofs of Propositions \ref{pr:miueq1} and \ref{pr:stationary}
consist of examining the stationary points
associated with the two conditions in Lemma 1.
(Note that there must exist a point on the rate frontier
that achieves sum rate $R(\mu)$.) We first show that
if $(2^{R_1}\!-\!1)(2^{R_2}\!-\!1)\leq1$, then $R(\mu)$
achieves its maximum at one of the four corner points listed in Table~\ref{t}.
From (\ref{eq:Rk}) and Lemma 1, we have
\begin{equation}
R(\mu)=\log\bigg[\frac{1+g_{11}P_1+g_{12}P_2}{1+g_{11}(P_1-P_{11})+g_{12}(P_2-P_{12})}\bigg]+\mu\log\bigg[\frac{1+g_{21}P_1+g_{22}P_2}{1+g_{21}P_{11}+g_{22}P_{12}}\bigg].
\end{equation}
It is easy to show that for any fixed $P_{12}$,
$\frac{\partial R(\mu)}{\partial P_{11}}$ is increasing
with $P_{11}$ so that $R(\mu)$ is maximized
at an extreme value for $P_{11}$.
More generally, it is straightforward to show
that $\frac{\partial R(\mu)}{\partial P_{jk}}$
is increasing with $P_{jk}$ for all $j$ and $k$.
Hence $R(\mu)$ is maximized at one of the
extreme points of the power constraint set.

To find stationary points on the rate region frontier
satisfying $(2^{R_1}\!-\!1)(2^{R_2}\!-\!1)\!>\!1$,
Lemma 1 states that we can assume $P_{1k}P_{2k}\!=\!0$.
In general, the rate region frontier may not contain points
satisfying the condition $(2^{R_1}\!-\!1)(2^{R_2}\!-\!1)\!>\!1$.
In that case, the power allocation schemes
satisfying $P_{kk}P_{jk}\!=\!0$ must be suboptimal. However,
without knowing the rate pairs on the frontier, we can assume this
condition is satisfied and characterize the stationary points, which then serve
as candidate points for achieving the maximum weighted sum rate. This
gives two possible frontiers corresponding to the two types of power
allocations in Table~\ref{t}. Namely, one candidate frontier is obtained by fixing
$P_{22}\!=\!P_2$ (or $P_{11}\!=\!P_1$) and sweeping the value of $P_{11}$
(or $P_{22}$) across the interval $[0,P_1]$ (or $[0,P_2]$). The other
candidate frontier is obtained by fixing
$P_{21}\!=\!P_1$ (or $P_{12}\!=\!P_2$) and sweeping the value of $P_{12}$
(or $P_{21}$) over the interval $[0,P_2]$ (or $[0,P_1]$). The actual rate
frontier is then the maximum of the two candidate frontiers.

For the first candidate frontier we have
\begin{equation}
R(\mu)=\log\bigg(1+\frac{g_{11}P_{11}}{1+g_{12}P_2}\bigg)+\mu\log\bigg(1+\frac{g_{22}P_2}{1+g_{21}P_{11}}\bigg).\label{eq:sumpiece1}
\end{equation}
Examining $\frac{dR(\mu)}{dP_{11}}$, the maximizing value $P_{11}^*$ is the solution to the quadratic equation
$a_1 P_{11}^2\!+\!b_1 P_{11}\!+\!c_1\!=\!0$, where $a_1\!=\!g_{11}g_{21}^2\!>\!0$,
$b_1\!=\!2g_{11}g_{21}\!+\!(1-\mu)g_{11}g_{21}g_{22}P_2$, and
$c_1\!=\!g_{11}(1\!+\!g_{22}P_2)\!-\!\mu g_{21}g_{22}P_2(1\!+\!g_{12}P_2)$.
Since $a_1\!>\!0$, the smaller root is the solution.
It is easy to check that if
$\mu$, $g_{jk}$, and $P_2$ satisfy $b_1^2\!-\!4a_1c_1\!\geq\!0$ and $0\!<\!P_{11}^*\!<\!P_1$,
then $P_{11}^*$ achieves the maximum $R(\mu)$.
Similarly, we could fix $P_{11}\!=\!P_1$
and optimize over $P_{22}$. The resulting necessary conditions show that if $\mu\!>\!1$
(or $\mu\!<\!1$), then maximizing over $P_{11}$ (or $P_{22}$) gives a candidate
stationary point, as stated in Proposition~\ref{pr:stationary}.
A similar argument shows that if $\mu\!>\!1$ (or $\mu\!<\!1$),
then maximizing over $P_{12}$ (or $P_{21}$)
gives a second candidate stationary point on the frontier.

If $\mu\!=\!1$, then $b_1\!>\!0$, which implies that
$P_{11}^*\!<\!0$ (if it is real). It can be similarly
verified for the other cases that there are no valid stationary points
in the interior of the power constraint set, which establishes
Proposition \ref{pr:miueq1}.
\end{proof}

The preceding propositions state that the maximum weighted sum rate
can be efficiently determined by searching over the small number of
candidate power allocations shown in Table~\ref{t}.
This will be used as the basis for optimizing wideband power allocations
discussed in the next section.

\section{Frequency-Selective Channels}
\subsection{Problem Formulation}
We now consider a wideband system with frequency-selective channels.
A wideband channel is modeled as a set of $L$ discrete
(parallel) channels. Each sub-channel is modeled similarly as (\ref{eq:recsig1}), with the same CSI known at the terminals.\footnote{The phase difference is in fact identical over different sub-channels.} Instead of one power constraint for each sub-channel, the $L$ sub-channels are subject to a total power constraint at each BTS. The problem is to maximize the weighted sum
across users of the rates summed across sub-channels:
\begin{subequations}
\label{eq:problem1}
\begin{align}
\underset{\{P_{jk}(l),P_{k}(l)\}}{\operatorname{\text{maximize}}}&~~\sum_{l=1}^{L}\big[R_1 (l) + \mu R_2 (l)\big]\\
\text{subject to}&~~P_{1k}(l)+P_{2k}(l)\leq P_{k}(l),~\forall k,l\\
        &~~\sum_{l=1}^{L}P_k(l)\leq P_{\text{tot},k},~~~P_{jk}(l)\geq0,~~~P_k(l)\geq0,~~~\forall j,k,l,\label{eq:problem1Con}
\end{align}
\end{subequations}
where $l$ denotes the sub-channel index, $P_k(l)$ denotes the power
allocated to sub-channel $l$, and $P_{\text{tot},k}$ denotes the total
power constraint at BTS $k$. Rates $R_1 (l)$ and $R_2 (l)$
are given by (\ref{eq:Rk}), where the channel gains
$g_{jk}$ and powers depend on $l$.

This can be viewed as a two-level optimization problem. At the
lower level the weighted sum rate is maximized for each
sub-channel given its allocated power. The upper level then
optimizes the power allocation across sub-channels
subject to the total power constraints.
Based on the discussion in the last section,
the maximum rate for each sub-channel is
achieved by one of the cases in Table~\ref{t}. In general, solving the
two-level problem requires iterating between the lower- and upper-levels.

The three cooperative power assignments in Table~\ref{t} give the
following weighted sum rates for sub-channel $l$:
\begin{align}
    R_1^{(\text{c})}(l)\triangleq&\log\big[1+g_{11}(l)P_1(l)+g_{12}(l)P_2(l)\big]\label{eq:rc1}\\
    R_2^{(\text{c})} (l)\triangleq&\mu \log\big[1+g_{21}(l)P_1(l)+g_{22}(l)P_2(l)\big]\label{eq:rc2}\\
    R_3^{(\text{c})}(l)\triangleq&\log\bigg[1+\frac{g_{12}(l)P_2(l)}{1+g_{11}(l)P_1(l)}\bigg]+\mu\log\bigg[1+\frac{g_{21}(l)P_1(l)}{1+g_{22}(l)P_2(l)}\bigg],\label{eq:rc3}
\end{align}
whereas the non-cooperative assignment gives
\begin{equation}
\label{eq:rnc}
R^{\text{(nc)}}(l)\triangleq\log\bigg[1+\frac{g_{11}(l)P_1(l)}{1+g_{12}(l)P_2(l)}\bigg]+\mu\log\bigg[1+\frac{g_{22}(l)P_2(l)}{1+g_{11}(l)P_1(l)}\bigg].
\end{equation}
The power control problem is then
\begin{equation}
\label{eq:problem2}
        \underset{\{P_{k}(l)\}}{\operatorname{\text{maximize}}}~~\sum_{l=1}^{L}
        \max\bigg\{R_1^{(\text{c})}(l),R_2^{(\text{c})}(l),R_3^{(\text{c})}(l),R^{(\text{nc})}(l)\bigg\}
\end{equation}
subject to (\ref{eq:problem1Con}).

Note that the rate objective includes only the corner points and
does not explicitly include the interior points (stationary points).
However, the interior points are implicitly included in the rate
objective due to the power optimization at the upper level. Namely,
if the weighted sum rate for sub-channel $l$ is maximized at an
interior point, e.g., corresponding to $P_{11}^* (l)\!<\!\overline{P}_1
(l)$, $P_{22}(l)\!=\!\overline{P}_2 (l)$, $P_{12}(l)\!=\!P_{21}(l)\!=\!0$, where
$\overline{P}_k (l)$ is the $l$-th sub-channel power constraint at BTS
$k$, then the rate can be increased by decreasing $\overline{P}_1 (l)$
via the upper-level optimization.

\subsection{Continuous Power Allocation}
Problem (\ref{eq:problem2}) is non-convex in general because
of the non-convexity of the objective function. However, letting the
number of sub-carriers within a given band $F$
go to infinity, we can assume that $g_{jk}(l)$ converges to a continuous
function of frequency $f\!\in\!F$, and the corresponding continuous
optimization problem can be efficiently solved numerically. The continuous optimization problem can be
formulated as
\begin{subequations}
\label{eq:problem3}
\begin{align}
\underset{\{P_{k}(f)\}}{\operatorname{\text{maximize}}}&~~\int_{F}\max\bigg\{R_1^{(\text{c})}(f),R_2^{(\text{c})}(f),R_3^{(\text{c})}(f),R^{(\text{nc})}(f)\bigg\} d f\\
\text{subject to}&~~\int_{F}P_k(f)df\leq
P_{\text{tot},k},~~~P_k(f)\geq0,~\forall k,f,\label{eq:problem3Con}
\end{align}
\end{subequations}
where the index $l$ is replaced by the continuous variable $f$.

\newtheorem{definition}{Definition}
{\em Definition \cite{YuLui06TC}:}
Consider the general optimization problem:
\begin{subequations}
\begin{align}
\text{maximize}&~~\sum_{n=1}^{N}f_n(\boldsymbol{x}_n)\\
\text{subject to}&~~\sum_{n=1}^{N}\boldsymbol{h}_n(\boldsymbol{x}_n)\leq\boldsymbol{P},
\end{align}\label{eq:timesharing}
\end{subequations}
where $\boldsymbol{x}_n\in \mathbb{R}^K$ are the optimization variables, each function $f_n(\cdot)\!:~\mathbb{R}^K\!\rightarrow\!\mathbb{R}$ is not necessarily concave, and each function $\boldsymbol{h}_n(\cdot)\!\!:\mathbb{R}^K\!\rightarrow\!\mathbb{R}^L$ is not necessarily convex. Power constraints are denoted by an $L$-vector $\boldsymbol{P}$. Here, ``$\leq$'' is used to denote a component-wise inequality.
An optimization problem of the form (\ref{eq:timesharing}) is said to satisfy the {\em time-sharing condition} if for any $\boldsymbol{P}_{\boldsymbol{x}}$, $\boldsymbol{P}_{\boldsymbol{y}}$ with corresponding optimal solutions $\boldsymbol{x}_n^*$ and $\boldsymbol{y}_n^*$, respectively, and for any $0\leq v\leq1$, there exists a feasible solution $\boldsymbol{z}_n$, such that $\sum_{n=1}^N\boldsymbol{h}_n(\boldsymbol{z}_n)\leq \nu\boldsymbol{P}_{\boldsymbol{x}}+(1-\nu)\boldsymbol{P}_{\boldsymbol{y}}$, and $\sum_{n=1}^Nf_n(\boldsymbol{z}_n)\geq \nu\sum_{n=1}^Nf_n(\boldsymbol{x}_n^*)+(1-\nu)\sum_{n=1}^Nf_n(\boldsymbol{y}_n^*)$.

The time-sharing condition essentially states that the optimal
value of the objective in (\ref{eq:timesharing}) is concave in
$\boldsymbol{P}$. It is shown in \cite{YuLui06TC} that if the time-sharing
condition is satisfied, then the optimization problem has zero
duality gap, i.e., solving the dual problem gives the same optimal value as solving the primal problem even if it is not convex.

Since $g_{jk}(f)$ is continuous in $f$, $R_{i}^{(\text{c})}(f),~i\!=\!1,2,3$ and
$R^{(\text{nc})}(f)$ are continuous in $f$, and the integrand of the
objective function is continuous in $f$. Therefore, we can apply the techniques used in the proof of Theorem 2 in \cite{YuLui06TC} to
show that this optimization problem satisfies the time-sharing condition.\footnote{The objective function of the problem considered in Theorem 2 \cite{YuLui06TC} is different from that considered here, but as long as the integrand in the objective function is continuous, the proof of Theorem 2 still applies.} Then the optimization problem (\ref{eq:problem3}) has zero
duality gap and for this problem solving its
dual is more efficient.


The solution to Problem (\ref{eq:problem3}) approximates
the solution to (\ref{eq:problem2}) and it becomes more accurate as
$L\!\rightarrow\!\infty$. In practical systems with a large number of
sub-carriers, the channel gains between consecutive sub-carriers are
typically highly correlated. Hence we expect that the time-sharing
condition is approximately satisfied for (\ref{eq:problem2}), so that the
solution to the dual problem will be nearly-optimal.

The Lagrangian function associated with Problem (\ref{eq:problem2}) is
\begin{align}
L(\boldsymbol{P}_1,&\boldsymbol{P}_2,\boldsymbol{\lambda})=\sum_{l=1}^{L}\max\bigg\{R_1^{(\text{c})}(l),R_2^{(\text{c})}(l),R_3^{(\text{c})}(l),R^{(\text{nc})}(l)\bigg\}\notag\\
&-\lambda_1\bigg(\sum_{l=1}^{L}P_1(l)-P_{\text{tot},1}\bigg)-\lambda_2\bigg(\sum_{l=1}^{L}P_2(l)-P_{\text{tot},2}\bigg),
\end{align}
where $\lambda_k\!\geq\!0$ is the
Lagrange multiplier associated with the power constraint for BTS
$k$ and the bold-font denotes the vector version of the corresponding
variables. The dual optimization problem associated
with Problem (\ref{eq:problem2}) is
\begin{equation*}
\underset{\lambda_1,\lambda_2}{\operatorname{\text{minimize}}}~~g(\boldsymbol{\lambda})~~\text{subject
to}~\lambda_1\geq0,~\lambda_2\geq0,
\end{equation*}
where
\begin{equation}
g(\boldsymbol{\lambda})=\displaystyle{\max_{\boldsymbol{P}_1,\boldsymbol{P}_2}}L(\boldsymbol{P}_1,\boldsymbol{P}_2,\boldsymbol{\lambda})\label{eq:dual}
\end{equation}
is the dual objective function.

Compared with numerically solving the primal problem (\ref{eq:problem2})
directly, two properties of the dual
problem lead to a reduction in computational complexity. One is
that for any fixed $\boldsymbol{\lambda}$, the solution to
(\ref{eq:dual}) can be computed per-sub-carrier since
(\ref{eq:dual}) can be decomposed into $L$ parallel unconstrained
optimization problems. Note that for each sub-carrier an exhaustive
search for the optimal $P_k(l)$ must be carried out. The other property
is the dual problem is convex in the variables
$\boldsymbol{\lambda}$, which guarantees the convergence of
numerical methods. In the numerical results that follow we solve the
dual optimization problem efficiently via a nested bisection search over
$\lambda_1$ and $\lambda_2$ \cite{YuLui06TC,CenYu06TC}. (An outer loop updates
$\lambda_1$ and an inner loop updates $\lambda_2$.) If the required accuracy of
each $\lambda$ is given by $\epsilon_\lambda$, then the overall computational
complexity of the bisection search is $\mathcal{O}(L\log(1/\epsilon_\lambda)^2)$,
which is linear in $L$ \cite{CenYu06TC}.

\subsection{High-SNR Approximation with $\mu\!=\!1$}
At high signal-to-noise ratios (SNRs), we have $R_i^{(\text{c})}\!>\!R^{(\text{nc})}$ and
$R_i^{(\text{c})}\!>\!R_3^{(\text{c})}$ for $i\!=\!1,2$, since
$R^{(\text{nc})}$ and $R_3^{(\text{c})}$ are interference-limited.
Hence we can simplify Problem \eqref{eq:problem3} by maximizing over only $R_i^{(\text{c})}$,
$i\!=\!1,2$, in the integrand of the rate objective. The corresponding
suboptimal power allocation problem can be written as
\begin{equation}
\label{eq:problem4}
\underset{\{P_k(f)\}}{\operatorname{\text{maximize}}}~~\int_{F}\max\bigg\{R_1^{(\text{c})}(f),R_2^{(\text{c})}(f)\bigg\}df
\end{equation}
subject to (\ref{eq:problem3Con}). This is still a two-level optimization problem. The lower level selects
the better MAC channel from the two options for each
sub-carrier, and the upper level distributes the power to each
sub-carrier subject to the total power constraint.

Although $R_1^{(\text{c})}(f)$ and $R_2^{(\text{c})}(f)$ are concave
functions of $P_1(f)$ and $P_2(f)$,
$\max\big\{R_1^{(\text{c})}(f),R_2^{(\text{c})}(f)\big\}$ is in
general not concave in $P_1(f)$ and $P_2(f)$. Problem (\ref{eq:problem4})
can be efficiently solved as described in the last subsection
(via its dual problem); however, it turns out that under some mild
conditions it can be transformed into a convex program so that
standard efficient numerical techniques can also be applied.

Specifically, we solve this problem by finding a convex function that upper bounds $\max\big\{R_1^{(\text{c})}(f),$\\$R_2^{(\text{c})}(f)\big\}$, and
optimizing this upper bound over the power allocations. It can be
shown that substituting the optimized power
allocation for the upper bound into the original objective in
(\ref{eq:problem4}) gives the same value as the optimized upper bound.

Letting
\begin{equation}
R_{\text{ub}}(f)\triangleq\log\big[1+\max\{g_{11}(f),g_{21}(f)\}P_1(f)+\max\{g_{12}(f),g_{22}(f)\}P_2(f)\big],\label{eq:ubrate}
\end{equation}
we observe that $R_{\text{ub}}(f)$ serves as a pointwise upper
bound for any full-cooperation scheme, i.e.,
\begin{equation}
\max\bigg\{R_1^{(\text{c})}(f),R_2^{(\text{c})}(f)\bigg\}\leq
R_{\text{ub}}(f),~\forall f.\label{eq:upperbound}
\end{equation}
We now consider the optimization problem
\begin{equation}
\label{eq:problem5}
\underset{\{P_k(f)\}}{\operatorname{\text{maximize}}}~~\int_FR_{\text{ub}}(f)df
\end{equation}
subject to (\ref{eq:problem3Con}).
Since $R_{\text{ub}}(f)$ is concave with respect to $P_1(f)$ and
$P_2(f)$, Problem (\ref{eq:problem5}) is a
convex optimization problem.\footnote{Similar problems have been considered in
\cite{TseHan98IT,KnoHum95ICC}. Here the difference is that in
(\ref{eq:ubrate}) for each branch of the MAC channel and each $f$, there are two optional
channel gains to be selected.} Therefore the necessary
Karush-Kuhn-Tucker (KKT) conditions for optimality are also
sufficient. Letting $y_k(f)\!=\!1/\max\{g_{1k}(f),g_{2k}(f)\}$, the KKT conditions
for the optimal power allocation,
$P_1^*(f)$ and $P_2^*(f)$, can be stated as
\begin{align*}
\text{a}.~&\text{If}~\lambda_1y_1(f)<\lambda_2y_2(f),~\text{then}\\
&P_1^*(f)=\bigg(\frac{1}{\lambda_1}-y_1(f)\bigg)^{+},~P_2^*(f)=0\\
\text{b}.~&\text{If}~\lambda_1y_1(f)>\lambda_2y_2(f),~\text{then}\\
&P_1^*(f)=0,~P_2^*(f)=\bigg(\frac{1}{\lambda_2}-y_2(f)\bigg)^{+}\\
\text{c}.~&\text{If}~\lambda_1y_1(f)=\lambda_2y_2(f),~\text{then}\\
&P_1^*(f)+\frac{y_1(f)}{y_2(f)}P_2^*(f)=\bigg(\frac{1}{\lambda_1}-y_1(f)\bigg)^{+},~P_1^*(f),P_2^*(f)\geq0,
\end{align*}
where $(x)^{+}\!\triangleq\!\max\{x,0\}$, $\lambda_1$ and $\lambda_2$
are non-negative and can be determined by substituting the optimal
power allocation into the power constraints.

For a given set of channel fading gains $\boldsymbol{g}_f$, the
optimal power allocation is not unique only for the preceding Case c.
Assuming the joint distribution of the channel gains is continuous
(e.g., Rayleigh fading), this happens with probability zero.
Therefore with probability one the optimal power allocation is
unique and is determined by the first two conditions. The optimal
power control scheme implies that only one BTS is assigned to transmit
at any given $f$, i.e., the BTS with relatively stronger direct- or
cross-channel gains.
The two-level water-filling structure of the power allocation indicates that orthogonal transmission is optimal and the gain of the cooperative scheme as considered here in the wideband channel comes from cell selection for each sub-channel, since for each sub-channel at most one link among the four direct- and cross-links between BTSs and mobiles is activated.

It is straightforward to show that substituting
the optimized $P_1^*(f)$ and
$P_2^*(f)$ in the objective in (\ref{eq:problem4})
gives the same result as substituting those functions into the
corresponding upper bound $R_{\text{ub}} (f)$.
This is because the solution states that only one BTS transmits at any given $f$.
Since $P_1^*(f)$ and $P_2^*(f)$
maximize the upper bound, they must also maximize
the original objective.

We observe that
$\big[P_1^*(f),P_2^*(f)\big]$ is a KKT point for Problem (\ref{eq:problem4}).
This is because when substituting
$\big[P_1^*(f),P_2^*(f)\big]$ into the four rates listed in
(\ref{eq:rc1})--(\ref{eq:rnc}), the maximum
value among the four rates is equal to the maximum value of the
two cooperative rates in (\ref{eq:rc1}) and (\ref{eq:rc2}). Since
$\big[P_1^*(f),P_2^*(f)\big]$ is
a KKT point for Problem (\ref{eq:problem4}), it must also be a KKT
point for Problem (\ref{eq:problem3}). This implies that
$\big[P_1^*(f),P_2^*(f)\big]$ is a locally optimal power
allocation scheme for (\ref{eq:problem3}) although it may not be globally
optimal.

We emphasize that the equivalence between Problems (\ref{eq:problem4})
and (\ref{eq:problem5}) relies on the continuity
of the integrand in the objective function. With a finite number of
sub-carriers, the solutions to the two problems may not be the same.
For example, with only one sub-carrier, by
inspection the optimized objective in (\ref{eq:problem5}) is $\log\big(1\!+\!\max\{g_{11},g_{21}\}P_{\text{tot},1}\!+\!\max\{g_{12},g_{22}\}P_{\text{tot},2}\big)$,
which cannot be achieved by (\ref{eq:problem4}) in general.

\begin{figure}
  \center
  \subfigure[]{
   \includegraphics[width=0.5\textwidth]{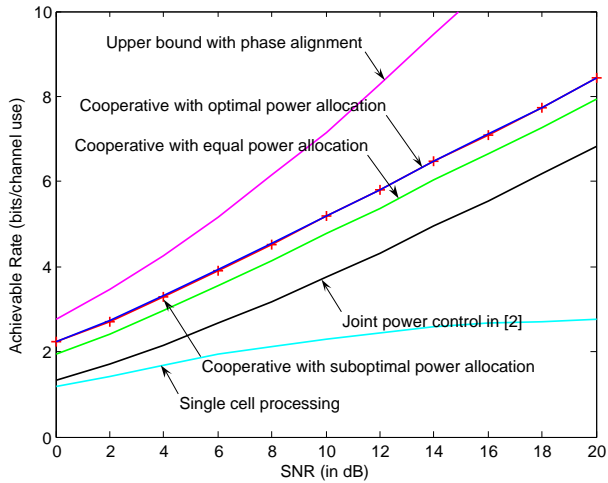}
   \label{f:wide}
   }
  \subfigure[]{
   \includegraphics[width=0.5\textwidth]{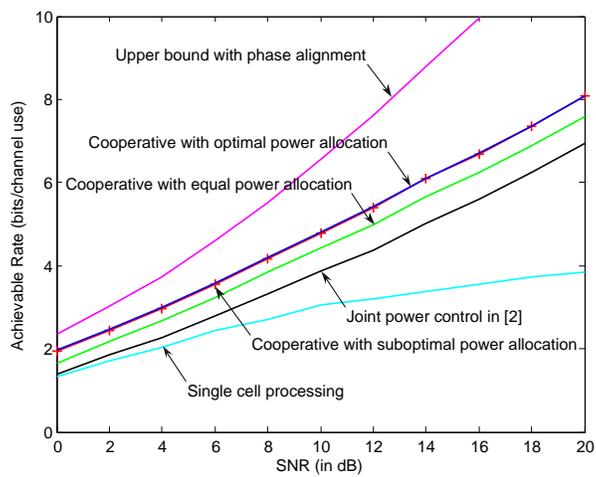}
   \label{f:wide_3dB}
   }
  \caption{The achievable sum rate with: (a) equal direct- and cross-channel gains, i.e., $\mathbb{E}[g_{jk}]\!=\!1,~\forall j,k$; (b) direct-channels are 3 dB stronger than cross-channels, i.e., $\mathbb{E}[g_{jj}]\!=\!1,~j=1,2$ and $\mathbb{E}[g_{jk}]\!=\!0.5,~j,k\!=\!1,2,~j\neq k$.}
  \label{f:s_antenna}
\end{figure}
In Fig.~\ref{f:s_antenna} we compare the maximum sum rates of
the cooperative and non-cooperative schemes with wideband channels
and $\mu\!=\!1$. For this example there are $L\!=\!128$
sub-carriers. 
The channels on each sub-carrier across the four direct- and cross-links undergo independent Rayleigh fading, and for each link the channels across sub-carriers are assumed to have correlation coefficients of $0.95$. The figure compares achievable rates
for the following scenarios: 1) optimized power assignments across
sub-carriers and both BTSs according to (\ref{eq:problem3}); 2) optimized power
assignments across sub-carriers and both BTSs according to (\ref{eq:problem4}); 3) cooperation
between BTSs with equal power assignments across sub-carriers; 4) both BTSs carry out
joint power control but do not assist each other
with data transmissions \cite{GjeGes08TWC}. The achievable downlink sum rate of
perfect BTS cooperation with phase alignment is included for comparison.
Fig.~\ref{f:wide} shows results with equal-variance direct- and
cross-channel gains, which corresponds to the scenario where the
two mobiles are both located close to the cell boundary, and
Fig.~\ref{f:wide_3dB} shows results for the case where the cross-channel
gains are 3 dB weaker than the direct-channel gains.

The results in Fig.~\ref{f:wide} show that the cooperative scheme
considered offers approximately 5 dB gain relative to non-cooperative
joint power control scheme in \cite{GjeGes08TWC}. Also, cooperation
with wideband power allocation offers about one dB gain with respect
to equal power allocation and there is negligible difference
between the suboptimal and optimal wideband power allocations.

The cooperative scheme considered can only
provide diversity gain, and therefore achieves only one degree of
freedom (asymptotic slope of rate curves in Fig.~\ref{f:s_antenna}).
In contrast, if the BTSs can cooperate with phase alignment, then
two degrees of freedom can be achieved
as illustrated in the figure. The performance
improvement due to cooperation diminishes if the average cross-channel
gains become weaker than the direct gains, as illustrated in Fig.~\ref{f:wide_3dB}.


\section{Multiple Transmit Antennas}
We now consider the case where each BTS has $N_t\!\geq\!2$ transmit
antennas and each mobile has a single receive antenna. Assuming a
narrowband system with block fading, the baseband signal received by
mobile $j\!=\!1,2$ during the $n$-th symbol interval is
\begin{equation}
    y_j{(n)}=\big(\boldsymbol{h}_{j1}e^{i\theta_{j1}(n)}\big)^{\dagger}\boldsymbol{x}_{1}{(n)}+\big(\boldsymbol{h}_{j2}e^{i\theta_{j2}(n)}\big)^{\dagger}\boldsymbol{x}_{2}{(n)}+z_j{(n)}\label{eq:recsig2},
\end{equation}
where $\boldsymbol{x}_k{(n)}$, for $k\!=\!1,2$, denotes the transmit signal vector of dimension $N_t\!\times\!1$ from BTS $k$, $\boldsymbol{h}_{jk}$ denotes the complex channel vector consisting of the fading coefficients from the transmit antennas at BTS
$k$ to the receive antenna at mobile $j$, and $\theta_{jk}(n)$ denotes the corresponding rapidly changing phase caused by the drifting frequency offset of the local oscillator. The same drift is experienced by all antennas. Similarly, it is assumed that
the complex vectors $(\boldsymbol{h}_{j1},\boldsymbol{h}_{j2})$ are known
to both BTSs and mobile $j$ and remain constant within one coding block, while the phases $\big(\theta_{j1}(n),\theta_{j2}(n)\big)$ are known to mobile $j$ and unknown to the other BTS. Without loss of generality, we can assume $\theta_{11}(n)\!=\!\theta_{22}(n)\!=\!0$ for all $n$.

\subsection{Cooperative Beamforming}
In analogy with the single transmit antenna case, in the cooperative
scheme each BTS splits its message into two parts,
where one part is transmitted by itself and the other part is shared
with and transmitted by the other BTS. Each BTS transmits a
superposition of two codewords intended for the two mobiles and each
codeword consists of scalar coding followed by beamforming, which
can be written as
\begin{equation}
\boldsymbol{x}_k(n)=\boldsymbol{v}_{1k}x_{1k}{(n)}+\boldsymbol{v}_{2k}x_{2k}{(n)},~k=1,2,
\end{equation}
where $\boldsymbol{v}_{jk}$ is a $N_t\!\times\!1$ vector with
$\|\boldsymbol{v}_{jk}\|\!=\!1$, and denotes the normalized beamforming
vector for the scalar symbol $x_{jk}$. The
per-BTS power constraints are again given by (\ref{eq:powercon}).
Each mobile decodes its two messages successively,
treating signals for the other mobile as background noise. This
scheme achieves the rate pair $(R_1^{(\text{bf})},R_2^{(\text{bf})})$ where
\begin{equation}
R_{k}^{(\text{bf})}=\log\bigg(1+\frac{|\boldsymbol{h}_{k1}^{\dagger}\boldsymbol{v}_{k1}|^2P_{k1}+|\boldsymbol{h}_{k2}^{\dagger}\boldsymbol{v}_{k2}|^2P_{k2}}
{1+|\boldsymbol{h}_{k1}^{\dagger}\boldsymbol{v}_{j1}|^2P_{j1}+|\boldsymbol{h}_{k2}^{\dagger}\boldsymbol{v}_{j2}|^2P_{j2}}\bigg),~j\neq k.
\end{equation}

Define
$\bar{\boldsymbol{h}}_{jk}\!\triangleq\!\frac{\boldsymbol{h}_{jk}}{\|\boldsymbol{h}_{jk}\|}$
and $g_{jk}\!\triangleq\!\|\boldsymbol{h}_{jk}\|^2$. The optimal
beamforming vector $\boldsymbol{v}_{jk}$ must lie in the space
spanned by $\bar{\boldsymbol{h}}_{1k}$ and
$\bar{\boldsymbol{h}}_{2k}$ \cite{JorLar08TSP}, because any power spent on the null space of $\bar{\boldsymbol{h}}_{1k}$ and
$\bar{\boldsymbol{h}}_{2k}$ will not be received by any mobile, and it does not have any impact on the signal-to-interference-plus-noise ratio (SINR) at each mobile. Fig.~\ref{f:illusbf}
illustrates the beamforming vector $\boldsymbol{v}_{11}$ geometrically in the
plane spanned by $\bar{\boldsymbol{h}}_{11}$ and $\bar{\boldsymbol{h}}_{21}$, where $\alpha_1\!\triangleq\!\cos^{-1}\!\big|\bar{\boldsymbol{h}}_{11}^\dagger\bar{\boldsymbol{h}}_{21}\big|$
denotes the angle between $\bar{\boldsymbol{h}}_{11}$ and $\bar{\boldsymbol{h}}_{21}$,
and $\beta_{11}$ denotes the angle between $\boldsymbol{v}_{11}$ and $\bar{\boldsymbol{h}}_{21}$.
Then $\boldsymbol{v}_{11}$ can be parameterized as
\begin{align}
\boldsymbol{v}_{11}=\frac{\cos{\beta_{11}}}{\cos{\alpha_1}}\boldsymbol{\Pi}_{\bar{\boldsymbol{h}}_{21}}\bar{\boldsymbol{h}}_{11}
+\frac{\sin{\beta_{11}}}{\sin{\alpha_1}}\boldsymbol{\Pi}^{\bot}_{\bar{\boldsymbol{h}}_{21}}\bar{\boldsymbol{h}}_{11},\label{eq:beamformer}
\end{align}
where $\boldsymbol{\Pi}_{\bar{\boldsymbol{h}}_{21}}\!\triangleq\!\bar{\boldsymbol{h}}_{21}\bar{\boldsymbol{h}}_{21}^{\dagger}$
and $\boldsymbol{\Pi}^{\bot}_{\bar{\boldsymbol{h}}_{21}}\!\triangleq\!
\boldsymbol{I}\!-\!\boldsymbol{\Pi}_{\bar{\boldsymbol{h}}_{21}}$ denote the
projection and orthogonal projection onto the column space of
$\bar{\boldsymbol{h}}_{21}$, respectively, and the terms $\cos{\alpha_1}$
and $\sin{\alpha_1}$ in the corresponding denominators are used to normalize the two orthogonal
vectors $\boldsymbol{\Pi}_{\bar{\boldsymbol{h}}_{21}}\bar{\boldsymbol{h}}_{11}$ and $\boldsymbol{\Pi}^{\bot}_{\bar{\boldsymbol{h}}_{21}}\bar{\boldsymbol{h}}_{11}$.
\begin{figure}
  \center
  \includegraphics[width=0.28\textwidth]{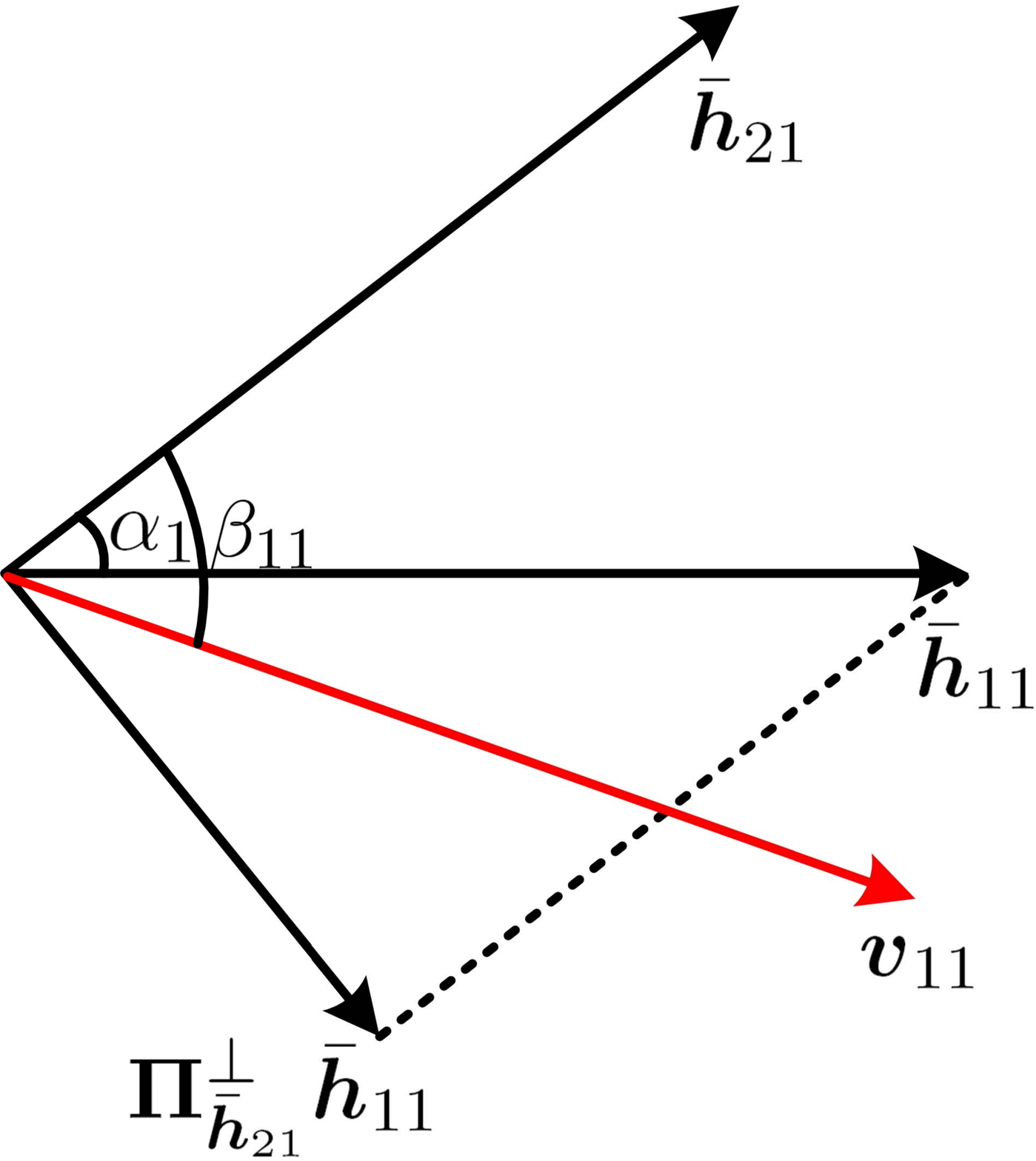}
  \caption{Illustration of the beamforming vector $\boldsymbol{v}_{11}$.}
  \label{f:illusbf}
\end{figure}

From (\ref{eq:beamformer}), we can write $
|\boldsymbol{h}_{11}^{\dagger}\boldsymbol{v}_{11}|^2\!=\!g_{11}\!\cos^2(\beta_{11}\!\!-\!\!\alpha_1)$
and $|\boldsymbol{h}_{21}^{\dagger}\boldsymbol{v}_{11}|^2\!=\!g_{21}\!\cos^2\!\beta_{11}$.
Similarly, the other three beamforming vectors can be parameterized
by introducing
$\alpha_2\!\triangleq\!\cos^{-1}\!\big|\bar{\boldsymbol{h}}_{12}^\dagger\bar{\boldsymbol{h}}_{22}\big|$
and $\beta_{jk}\in[0,\frac{\pi}{2}]$, $j,k =1,2$.
The achievable rate pair $(R_1^{(\text{bf})},R_2^{(\text{bf})})$ can then be re-stated as
\begin{equation}
R_{k}^{(\text{bf})}\!=\!\log\bigg[1\!+\!\frac{g_{k1}\!\cos^2(\beta_{k1}\!\!-\!\!\alpha_1)P_{k1}\!+\!g_{k2}\!\cos^2(\beta_{k2}\!\!-\!\!\alpha_2)P_{k2}}{1\!+\!g_{k1}\!\cos^2\!\beta_{j1}P_{j1}\!+\!g_{k2}\!\cos^2\!\beta_{j2}P_{j2}}\bigg],~j\neq k.\label{eq:beamRk}\\
\end{equation}
Optimizing the beamforming vectors $\{\boldsymbol{v}_{jk}\}$ is
now equivalent to optimizing the corresponding angles
$\{\beta_{jk}\},~j,k\!=\!1,2$. If $\beta_{jk}\!=\!\alpha_k$ then BTS $k$
transmits to mobile $j$ with a maximum-ratio beamformer and
if $\beta_{jk}\!=\!\frac{\pi}{2}$ then BTS $k$ transmits to mobile
$j$ with a zero-forcing beamformer. In general, the optimal beamforming
vectors must strike a balance between these two extremes. At high SNRs
the solution should be close to zero-forcing, and at low SNRs the
solution should be close to maximum-ratio combining. Note that with
$N_t\!\geq2\!$ antennas, the interference term in the denominator of
($\ref{eq:beamRk}$) can be nulled
out by choosing $\beta_{jk}\!=\!\frac{\pi}{2},~j,k\!=\!1,2$,
therefore two degrees of freedom can be achieved.

\subsection{The Achievable Rate Region}
The achievable rate region can be obtained by maximizing the
weighted sum rate $R^{(\text{bf})}(\mu)\triangleq R_1^{(\text{bf})}+\mu R_2^{(\text{bf})},~\mu\geq0$
over the beamforming vectors and the power allocated to each
message for each $\mu$ and then sweeping $\mu$. To achieve a rate
pair on the boundary of the rate region, the beams and powers must
be jointly optimized. The following proposition states that both BTSs should always
transmit with full power. 
\begin{prop}
\label{propbeam} For every rate pair
$(R_1^{(\text{bf})},R_2^{(\text{bf})})$ on the boundary of the rate
region, the corresponding power allocation satisfies
$P_{1k}\!+\!P_{2k}\!=\!P_k,~\forall k$.
\end{prop}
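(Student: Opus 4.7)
The plan is a proof by contradiction. Suppose $(R_1^{(\text{bf})},R_2^{(\text{bf})})$ is a boundary rate pair attained by powers $\{P_{jk}\}$ and beamformers $\{\boldsymbol{v}_{jk}\}$, and that the per-BTS constraint is slack at some BTS; without loss of generality assume $P_{11}+P_{21}<P_1$ with slack $\delta=P_1-P_{11}-P_{21}>0$. The goal is to exhibit a modified choice of parameters that strictly increases $R_1^{(\text{bf})}$ while leaving $R_2^{(\text{bf})}$ unchanged, contradicting the Pareto optimality of a boundary point.

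The modification exploits the $N_t\geq 2$ spatial degrees of freedom at BTS~1. Keep $P_{21}$, $\boldsymbol{v}_{21}$, and all parameters at BTS~2 fixed. Replace $(P_{11},\boldsymbol{v}_{11})$ by $(P_{11}',\boldsymbol{v}_{11}')$ with $P_{11}'=P_{11}+\delta$, which is feasible since $P_{11}'+P_{21}=P_1$, and choose the new angle $\beta_{11}'\in[0,\pi/2]$ defining $\boldsymbol{v}_{11}'$ via $\cos^2\beta_{11}'=\cos^2\beta_{11}\cdot P_{11}/(P_{11}+\delta)$, i.e., tilt $\boldsymbol{v}_{11}$ toward the zero-forcing direction for mobile~2. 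By construction, the interference power induced by $x_{11}$ at mobile~2, which is $g_{21}\cos^2\beta_{11}'\,P_{11}'$, equals the original $g_{21}\cos^2\beta_{11}\,P_{11}$. Since no other quantity appearing in $R_2^{(\text{bf})}$ is disturbed, $R_2^{(\text{bf})}$ is preserved exactly.

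The core technical step is to verify that the useful signal power at mobile~1 from $x_{11}$ strictly increases, i.e.,
\begin{equation*}
(P_{11}+\delta)\cos^2(\beta_{11}'-\alpha_1)>P_{11}\cos^2(\beta_{11}-\alpha_1).
\end{equation*}
Expanding $\cos(\beta_{11}'-\alpha_1)$ by the angle-difference identity, using $\cos^2\beta_{11}'=\cos^2\beta_{11}\cdot P_{11}/(P_{11}+\delta)$ and therefore $\sin^2\beta_{11}'=[P_{11}\sin^2\beta_{11}+\delta]/(P_{11}+\delta)$, the difference reduces after elementary manipulation to
\begin{equation*}
\delta\sin^2\alpha_1+2\cos\beta_{11}\sin\alpha_1\cos\alpha_1\Big[\sqrt{P_{11}^2\sin^2\beta_{11}+P_{11}\delta}-P_{11}\sin\beta_{11}\Big],
\end{equation*}
which is a sum of two non-negative terms and is strictly positive whenever $\delta>0$ and $\sin\alpha_1>0$. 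Thus $R_1^{(\text{bf})}$ strictly increases, contradicting the boundary assumption. The analogous modification to $(P_{22},\boldsymbol{v}_{22})$ using the angle $\alpha_2$ handles the symmetric case of slack at BTS~2.

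The main obstacle is verifying the strict increase in the useful signal: intuitively, tilting $\boldsymbol{v}_{11}$ toward zero-forcing forfeits some maximum-ratio gain at mobile~1, and one must confirm that the increase in transmit power more than compensates. A secondary issue is the degenerate case $\alpha_1=0$, where the two channels from BTS~1 are collinear and the spatial degrees of freedom collapse; this event has zero probability under any continuous fading distribution, and can also be resolved by a continuity argument or by reducing to the single-antenna setting of Lemma~1.
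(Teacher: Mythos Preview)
Your proof is correct and follows essentially the same approach as the paper: both increase $P_{11}$ while tilting $\boldsymbol{v}_{11}$ toward the zero-forcing direction so that the interference at mobile~2 stays fixed, and then verify that the desired signal power at mobile~1 strictly increases. The paper's verification of this last step is slightly more compact---it observes that with the interference $g_{21}\cos^2\beta_{11}P_{11}$ held constant, the useful power is proportional to $\cos^2(\beta_{11}-\alpha_1)/\cos^2\beta_{11}=(\cos\alpha_1+\sin\alpha_1\tan\beta_{11})^2$, which is increasing in $\beta_{11}$---but your explicit expansion of the difference establishes the same inequality.
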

\begin{proof}
This follows from the observation that each beam contains a component,
which is orthogonal to the cross-channel. Hence increasing power
along that component increases the desired power without
increasing interference.
Specifically, let $\{\beta_{jk},P_{jk}\},j,k\!\!=\!\!1,2$
be the optimal parameters for a rate pair on the boundary of the
rate region. From (\ref{eq:beamRk}), the
useful signal power from BTS $k$ to mobile $k$ is
$g_{kk}\!\cos^2(\beta_{kk}\!\!-\!\!\alpha_k)P_{kk}$ and the
corresponding interference power is $g_{jk}\!\cos^2\!\beta_{kk}P_{kk}$,
$j\!\neq\!k$. If $\beta_{kk}\!=\!\frac{\pi}{2}$, then
increasing $P_{kk}$ will increase $R_k^{(\text{bf})}$
without changing $R_j^{(\text{bf})}$.
If $\beta_{kk}\!\neq\!\frac{\pi}{2}$, then with fixed interference,
i.e., $g_{jk}\!\cos^2\!\beta_{kk}P_{kk}\!=\!I$,
the desired signal power can be expressed as
$\frac{g_{kk}\!\cos^2(\beta_{kk}\!-\!\alpha_k)I}{g_{jk}\!\cos^2\!\beta_{kk}}$,
which is an increasing function of $\beta_{kk}$, implying that $P_{kk}$
should be maximized. Therefore the power constraint
at BTS $k$ must be binding.
\end{proof}

Maximizing $R^{(\text{bf})}(\mu)$ is a
non-convex problem; however, since there are only six variables to
optimize it can be solved by exhaustive
search optimally or by an iterative approach, in which the power
allocation is optimized with fixed angles, the angles are optimized
with fixed power allocation, and these two procedures are iterated
until $R^{(\text{bf})}(\mu)$ converges. Note that convergence is
guaranteed since $R^{(\text{bf})}(\mu)$ monotonically increases
in each step, and is bounded due to the power limitations.
The iterative approach can reduce the
search complexity; however, optimality cannot be guaranteed although
in our simulations global optimality was always observed.

\begin{figure}
  \center
  \includegraphics[width=0.5\textwidth]{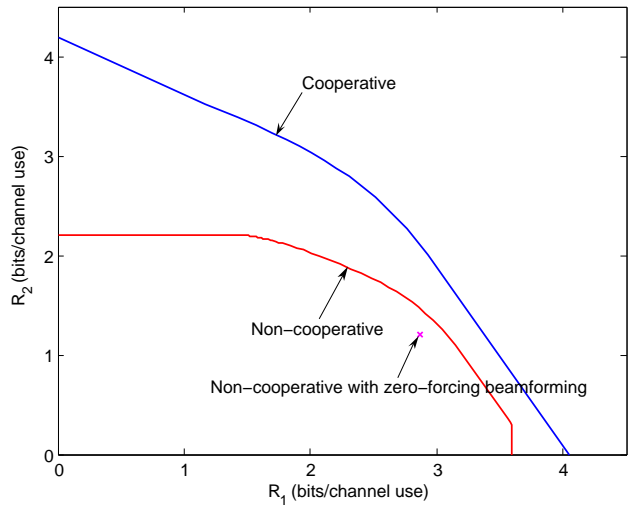}
  \caption{The rate region frontiers achieved by cooperative and non-cooperative schemes with $P_1\!=\!P_2\!=\!3$, $N_t\!=\!2$,
  and randomly generated channels.}
\label{f:beam}
\end{figure}
Fig.~\ref{f:beam} compares the rate region frontiers achieved by
the cooperative scheme with the non-cooperative scheme presented
in~\cite{JorLar08TSP}, in which the BTSs carry out joint beamforming
and power control but do not share messages. Also shown is the rate pair
achieved with zero-forcing transmission at each BTS without cooperation,
i.e., BTS $k$ transmits to its own associated mobile $k$ with the
orthogonal projection $\boldsymbol{I}\!-\!\boldsymbol{\Pi}_{\bar{\boldsymbol{h}}_{jk}},~j\!\neq\!k$.
The figure shows that for this example BTS cooperation gives
substantial gains in $R_2$ when $R_1$ is small.

\subsection{Frequency-Selective Channels}
In a wideband system with frequency-selective channels that are
modeled as a set of $L$ discrete channels, the cooperative
beamforming and power control problem is given by
\begin{subequations}
\label{eq:problem1bf}
\begin{align}
\underset{\{P_{jk}(l),\beta_{jk}(l)\}}{\operatorname{\text{maximize}}}&~~\sum_{l=1}^{L}\big[R_{1}^{(\text{bf})} (l) + \mu R_{2}^{(\text{bf})} (l)\big]\\
\text{subject to}&~~\sum_{l=1}^{L}\big[P_{1k}(l)+P_{2k}(l)\big]=P_{\text{tot},k},~~~P_{jk}(l)\geq0,~\forall j,k,l,\label{eq:beamcon2}
\end{align}
\end{subequations}
where $R_{k}^{(\text{bf})} (l)$ is given by
(\ref{eq:beamRk}), the channel gains $\{g_{jk}\}$,
angles $\{\beta_{jk},\alpha_{k}\}$, and the powers depend on $l$,
and the power constraints in (\ref{eq:beamcon2}) are satisfied with
equality due to Proposition \ref{propbeam}.

As for a single antenna, this is again a two-level optimization
problem. The lower level optimizes
the beamforming vectors to maximize the weighted
sum rate for each sub-channel given the power allocated to each
message. The upper level then optimizes the power allocation across
sub-channels for each message subject to the total power constraints.

Similar to the previous case with a single transmit antenna, the dual problem
associated with Problem (\ref{eq:problem1bf}) can be formulated where
the Lagrangian also depends on the angels $\{\boldsymbol{\beta}_{jk}\}$. (We omit the
details due to space limitations.) As before, letting the number of sub-channels
within a given band tend to infinity, we can assume that
$g_{jk}(l)$ and $\alpha_{k}(l)$ converge to continuous functions
of frequency $f$. The corresponding optimization problem over
$\{P_{jk}(f)\}$ and $\{\beta_{jk}(f)\}$ has zero-duality gap
and can be efficiently solved numerically. The numerical results
in Fig.~\ref{f:m_antenna} were generated by solving the discrete
version in (\ref{eq:problem1bf}) using a nested bisection search
for $\boldsymbol{\lambda}$, where the maximization of the Lagrangian
function in the inner loop is performed over the angles $\{\beta_{jk}(l)\},j,k=1,2$.

\begin{figure}
  \center
  \subfigure[]{
  \includegraphics[width=0.5\textwidth]{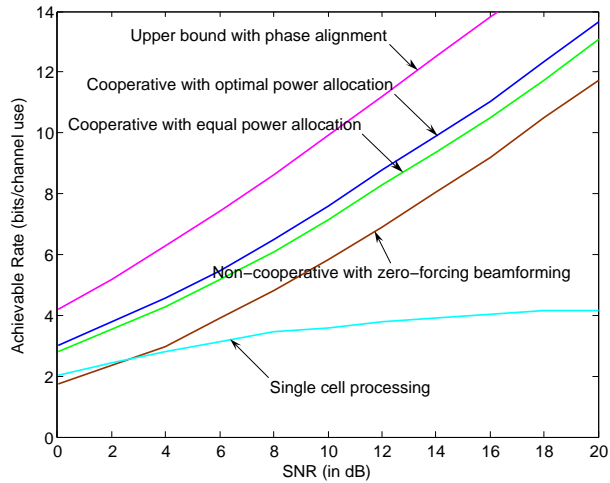}
  \label{f:widebeam}
  }
  \subfigure[]{
  \includegraphics[width=0.5\textwidth]{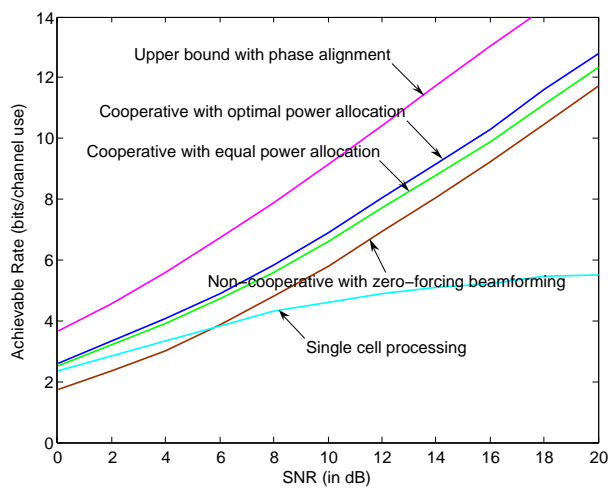}
  \label{f:widebeam_3dB}
  }
  \caption{The achievable sum rate with $N_t\!=\!2$ and (a) equal direct- and cross-channel gains, i.e., $\mathbb{E}[g_{jk}]\!=\!1,~\forall j,k$; (b) direct-channels are 3 dB stronger than cross-channels, i.e., $\mathbb{E}[g_{jj}]\!=\!1,~j\!=\!1,2$ and $\mathbb{E}[g_{jk}]\!=\!0.5,~j,k\!=\!1,2,~j\!\neq\!k$.}
  \label{f:m_antenna}
\end{figure}
In Fig.~\ref{f:m_antenna} we compare the maximum sum rates of
the cooperative and non-cooperative schemes with wideband channels
and $\mu\!=\!1$. There are $L\!=\!128$ sub-carriers. The four channel
vectors on each sub-carrier undergo independent Rayleigh fading, and for each link the channel vectors across sub-carriers are assumed to have correlation coefficients of $0.95$. The figure compares achievable rates
for the following scenarios: 1) optimized power assignments
across sub-carriers and both BTSs according to (\ref{eq:problem1bf}); 2) cooperative
transmission between BTSs with equal power assignments across
sub-carriers; 3)
joint beamforming between BTSs but without message sharing,
in which case each BTS transmits to its associated mobile in
the null space of the cross-channel to the other mobile \cite{JorLar08TSP}. The achievable downlink
sum rate of perfect BTS cooperation with phase alignment is also included for comparison.
Fig.~\ref{f:widebeam} shows results with equal-variance direct- and
cross-channel gains, and Fig.~\ref{f:widebeam_3dB} shows results
for the case where the cross-channel gains are 3 dB weaker than
the direct-channel gains.

The results in Fig.~\ref{f:widebeam} show that the cooperative
scheme considered offers approximately 4 dB gain relative to the
non-cooperative joint beamforming scheme presented in \cite{JorLar08TSP}.
Also, cooperation with wideband power allocation offers one dB gain
with respect to equal power allocation. The cooperative scheme
achieves the same number of degrees of freedom as with phase alignment
(which is
two since there are two single-antenna mobiles) at the expense of
adding one more antenna at each BTS.\footnote{Note, however, that
with phase alignment two transmit antennas per BTS can support
two additional noninterfering mobiles.}
The performance improvement due to cooperation diminishes
if the average cross-channel gains
become weaker than the direct gains, as illustrated in Fig.~\ref{f:widebeam_3dB}.

\section{Conclusions}
We have presented a two-cell cooperation scheme with message
sharing between BTSs, which does not require transmissions
to be phase-aligned.  With a single antenna at the BTSs and mobiles,
the rates are maximized by optimizing the power allocation across
messages, and also sub-channels in the wideband scenario. The scheme
provides large gains with respect to non-cooperative (single cell)
power optimization, but gains with respect to cooperative power
allocation across the two BTSs without message sharing are relatively
modest, although they can be significant, especially at low SNRs.
The gains are primarily due to cell selection, so that they are most
pronounced when the cross-channel gains are comparable with direct-channel
gains. We also extended our results to cooperative joint beamforming
with message sharing. This can provide more degrees of freedom compared
to the single transmit antenna case, but the gains due to message
sharing are again relatively modest. Finally, the absence of phase
alignment, as assumed here, reduces the achievable degrees of freedom
in the high-SNR regime relative to perfect phase alignment.
Fundamental limits (e.g., achievable rate region or degrees of freedom)
of the broadcast channel considered without
transmitter phase alignment, as well as schemes that exploit
partial phase information are left for future work.

\section*{Appendix: Proof of Lemma 1}
We rewrite (\ref{eq:Rk}) as
\begin{align}
&(2^{R_1}-1)(1+g_{11}P_{21}+g_{12}P_{22})=g_{11}P_{11}+g_{12}P_{12}\label{eq:R1v2}\\
&(2^{R_2}-1)(1+g_{21}P_{11}+g_{22}P_{12})=g_{21}P_{21}+g_{22}P_{22}.\label{eq:R2v2}
\end{align}
First, we consider the case $(2^{R_1}\!-\!1)(2^{R_2}\!-\!1)\!<\!1$. We will verify
that $P_{11}\!+\!P_{21}\!=\!P_1$ and $P_{12}\!+\!P_{22}\!=\!P_2$ by contradiction. Suppose $P_{12}\!+\!P_{22}\!<\!P_2$. Then we can
choose two small positive numbers $\Delta_{P_{12}}$ and
$\Delta_{P_{22}}$ that satisfy
\begin{equation}
\Delta _{P_{12}}=(2^{R_1}-1)\Delta_{P_{22}}\label{eq:deltaP}
\end{equation}
and
\begin{equation*}
(P_{12}+\Delta_{P_{12}})+(P_{22}+\Delta_{P_{22}})\leq P_2.
\end{equation*}
Let $P_{12}^{\prime}\!=\!P_{12}\!+\!\Delta_{P_{12}}$ and
$P_{22}^{\prime}\!=\!P_{22}\!+\!\Delta_{P_{22}}$. From (\ref{eq:deltaP}), if we
replace $P_{12}$ and $P_{22}$ in (\ref{eq:R1v2}) with
$P_{12}^{\prime}$ and $P_{22}^{\prime}$, then the equality still
holds, i.e.,
\begin{equation}
(2^{R_1}-1)(1+g_{11}P_{21}+g_{12}P_{22}^{\prime})=g_{11}P_{11}+g_{12}P_{12}^{\prime}.\label{eq:R1v3}\\
\end{equation}
However, since $(2^{R_2}\!-\!1)\Delta_{P_{12}}\!=\!(2^{R_2}\!-\!1)(2^{R_1}\!-\!1)\Delta_{P_{22}}\!<\!\Delta_{P_{22}}$, combining with (\ref{eq:R2v2}) gives
\begin{equation}
(2^{R_2}-1)(1+g_{21}P_{11}+g_{22}P_{12}^{\prime})<g_{21}P_{21}+g_{22}P_{22}^{\prime}.\label{eq:R2v3}
\end{equation}

Defining the achievable rate pair with the new power
allocation $(P_{11},P_{21},P_{12}^{\prime},P_{22}^{\prime})$ as
$(R_1^{\prime},R_2^{\prime})$, (\ref{eq:R1v3}) and (\ref{eq:R2v3})
imply
\begin{align*}
&R_1^{\prime}=\log\bigg(1+\frac{g_{11}P_{11}+g_{12}P_{12}^{\prime}}{1+g_{11}P_{21}+g_{12}P_{22}^{\prime}}\bigg)=R_1\\
&R_2^{\prime}=\log\bigg(1+\frac{g_{21}P_{21}+g_{22}P_{22}^{\prime}}{1+g_{21}P_{11}+g_{22}P_{12}^{\prime}}\bigg)>R_2.
\end{align*}
This contradicts the assumption that $(R_1,R_2)$ is on the rate
region frontier. Hence $P_{12}\!+\!P_{22}\!=\!P_2$ must hold at the optimum.
Similarly, it can be shown that
$P_{11}\!+\!P_{21}\!=\!P_1$.

If $(2^{R_1}\!-\!1)(2^{R_2}\!-\!1)\!=\!1$, then the optimal power
allocation scheme is not unique. Suppose there exists a solution
that satisfies
$P_{12}\!+\!P_{22}\!<\!P_2$, then we can choose $\Delta_{P_{12}}$ and
$\Delta _{P_{22}}$ to satisfy (\ref{eq:deltaP}) and
\begin{equation*}
(P_{12}+\Delta_{P_{12}})+(P_{22}+\Delta_{P_{22}})= P_2.
\end{equation*}
Then by the preceding argument, the new
power allocation $P_{12}^{\prime}\!=\!P_{12}\!+\!\Delta_{P_{12}}$ and
$P_{22}^{\prime}\!=\!P_{22}\!+\!\Delta_{P_{22}}$ must achieve the same rate as
$P_{12}$ and $P_{22}$. As a consequence, there exists an optimal power
allocation in which the power constraints at both BTSs are satisfied
with equality.

We now consider $(2^{R_1}\!-\!1)(2^{R_2}\!-\!1)\!>\!1$ and show that $P_{11}P_{21}\!=\!0$
and $P_{12}P_{22}\!=\!0$ by contradiction. Suppose $P_{12}\!>\!0$ and $P_{22}\!>\!0$.
Then we can choose two small positive numbers
$\Delta_{P_{12}}$ and $\Delta_{P_{22}}$ that satisfy
(\ref{eq:deltaP}) and
\begin{equation*}
P_{12}-\Delta_{P_{12}}\geq0,~P_{22}-\Delta_{P_{22}}\geq0.
\end{equation*}
Let $P_{12}^{\prime}\!=\!P_{12}\!-\!\Delta_{P_{12}}$ and
$P_{22}^{\prime}\!=\!P_{22}\!-\!\Delta_{P_{22}}$. From (\ref{eq:deltaP}), if
we replace $P_{12}$ and $P_{22}$ in (\ref{eq:R1v2}) with
$P_{12}^{\prime}$ and $P_{22}^{\prime}$ respectively, then (\ref{eq:R1v3})
still holds. In addition, since
$(2^{R_2}\!-\!1)\Delta_{P_{12}}\!=\!(2^{R_2}\!-\!1)(2^{R_1}\!-\!1)\Delta_{P_{22}}\!>\!\Delta_{P_{22}}$,
combining with (\ref{eq:R2v2}), we again obtain (\ref{eq:R2v3}).
As before, if we define the achievable rate pair with the new
power allocation $(P_{11},P_{21},P_{12}^{\prime},P_{22}^{\prime})$
as $(R_1^{\prime},R_2^{\prime})$, (\ref{eq:R1v3}) and
(\ref{eq:R2v3}) imply $R_1^{\prime}\!=\!R_1$ and $R_2^{\prime}\!>\!R_2$,
which contradicts the assumption that $(R_1,R_2)$ is on the rate
region frontier. Therefore, $P_{12}P_{22}\!=\!0$ and by similar arguments
we have $P_{11}P_{21}\!=\!0$.





\begin{thebibliography}{1}

\bibitem{GesHan10JSAC}
D.~Gesbert, S.~Hanly, H.~Huang, S.~Shamai (Shitz), O.~Simeone, and W.~Yu,
``Multi-cell MIMO cooperative networks: A new look at interference,''
{\em {IEEE} J.\ Select.\ Areas Commun.}, vol.~28, pp.~1380--1408, Dec. 2010.

\bibitem{GjeGes08TWC}
A.~Gjendemsj, D.~Gesbert, G.~E.~Oien, and S.~G.~Kiani, ``Binary
power control for sum rate maximization over multiple interfering
links,'' {\em IEEE Trans.\ Wireless Comm.}, vol.~7, pp.~3164--3173,
Aug.\ 2008.

\bibitem{ChoAnd08TWC}
W.~Choi and J.~G.~Andrews, ``The capacity gain from intercell scheduling in
  multi-antenna systems,'' {\em {IEEE} Trans.\ Wireless Commun.}, vol.~7,
  pp.~714--725, Feb. 2008.

\bibitem{GesKou11IT}
D.~Gesbert, M.~Kountouris, ``Rate scaling laws in multicell networks under
distributed power control and user scheduling,''
{\em IEEE Trans.\ Inform.\ Theory}, vol.~57, pp.~234--244, Jan. 2011.

\bibitem{LarJor08JSAC}
E.~Larsson and E.~Jorswieck, ``Competition versus cooperation on the {MISO}
  interference channel,'' {\em {IEEE} J.\ Select.\ Areas Commun.}, vol.~26,
  pp.~1059--1069, Sept. 2008.

\bibitem{Wyner94IT}
A.~Wyner, ``Shannon-theoretic approach to a {Gaussian} cellular multiple-access
  channel,'' {\em IEEE Trans.\ Inform.\ Theory}, vol.~40, pp.~1713--1727, Nov. 1994.

\bibitem{FosHua05CISS}
G.~J. Foschini, H.~Huang, K.~Karakayali, R.~A. Valenzuela, and S.~Venkatesan,
  ``The value of coherent base station coordination,'' in {\em Proc. Conference
  on Information Sciences and Systems (CISS)}, John Hopkins University, Mar.
  2005.

\bibitem{KarFos06TWC}
M.~Karakayali, G.~Foschini, and R.~Valenzuela, ``Network coordination for
  spectrally efficient communications in cellular systems,'' {\em IEEE Wireless
  Communications}, vol.~13, pp.~56--61, Aug. 2006.

\bibitem{SomZai07IT}
O.~Somekh, B.~Zaidel, and S.~Shamai (Shitz), ``Sum rate characterization of joint
  multiple cell-site processing,'' {\em IEEE Trans.\ Inform.\ Theory}, vol.~53,
  pp.~4473--4497, Dec. 2007.


\bibitem{CaiRam08Allerton}
G.~Caire, S.~Ramprashad, H.~Papadopoulos, C.~Pepin, and C.-E. Sundberg,
  ``Multiuser {MIMO} downlink with limited inter-cell cooperation: Approximate
  interference alignment in time, frequency and space,'' in {\em Proc.\ 46th Annual Allerton Conference on Communication,
  Control, and Computing}, Monticello, IL, Sept. 2008.

\bibitem{JinTse08EURASIP}
S.~Jing, D.~N.~C. Tse, J.~B. Soriaga, J.~Hou, J.~E. Smee, and R.~Padovani,
  ``Multicell downlink capacity with coordinated processing,'' {\em EURASIP
  Journal on Wireless Communications and Networking}, vol.~2008.

\bibitem{MarFet09GC}
P.~Marsch and G.~Fettweis, ``On downlink network {MIMO} under a constrained
  backhaul and imperfect channel knowledge,'' in {\em Proc.\ IEEE GLOBECOM}, Nov.
  30 - Dec. 04 2009.

\bibitem{MarFet11TWC}
P.~Marsch and G.~Fettweis, ``Uplink CoMP under a constrained backhaul and
imperfect channel knowledge,'' {\em {IEEE}
  Trans.\ Wireless Commun.}, vol.~10, pp.~1730--1742, Jun. 2011.

\bibitem{JafFos04EURASIP}
S.~A. Jafar, G.~J. Foschini, and A.~J. Goldsmith, ``Phantomnet: Exploring
  optimal multicellular multiple antenna systems,'' {\em EURASIP Journal on
  Applied Signal Processing}, vol.~2004, pp.~591--604.

\bibitem{SimSom08EURASIP}
O.~Simeone, O.~Somekh, G.~Kramer, H.~V. Poor, and S.~Shamai (Shitz), ``Throughput of
  cellular systems with conferencing mobiles and cooperative base stations,''
  {\em EURASIP Journal on Wireless Communications and Networking}, vol.~2008.


\bibitem{ZhaMeh08TWC}
H.~Zhang, N.~Mehta, A.~Molisch, J.~Zhang, and H.~Dai, ``Asynchronous
  interference mitigation in cooperative base station systems,'' {\em {IEEE}
  Trans.\ Wireless Commun.}, vol.~7, pp.~155--165, Jan. 2008.

  \bibitem{AndCho07TWC}
J.~G.~Andrews, W.~Choi, and R.~W.~Heath Jr., ``Overcoming interference in spatial
  multiplexing {MIMO} cellular networks,'' {\em IEEE Wireless Communications},
  vol.~14, pp.~95--104, Dec. 2007.

\bibitem{SimSom09EURASIP}
O.~Simeone, O.~Somekh, H.~V. Poor, and S.~Shamai (Shitz), ``Downlink multicell
  processing with limited-backhaul capacity,'' {\em EURASIP Journal on Advances
  in Signal Processing}, vol.~2009.

\bibitem{BenHua09ICC}
R.~Bendlin, Y.-F. Huang, M.~Ivrlac, and J.~Nossek, ``Fast distributed
  multi-cell scheduling with delayed limited-capacity backhaul links,'' in {\em
  Proc.\ IEEE Int.\ Conf.\ Commun.}, pp.~1--5, Jun. 2009.

%

\bibitem{ChaSez07Allerton}
M.~Charafeddine, A.~Sezgin, and A.~Paulraj, ``Rate region frontiers
for $n$--user interference channel with interference as noise,'' in
{\em Proc. 45th Annual Allerton Conference on Communication, Control,
and Computing}, Monticello, IL, Sept. 2007.

\bibitem{JorLar08TSP}
E.~A.~Jorswieck, E.~G.~Larsson, and D.~Danev, ``Complete
characterization of the {P}areto boundary for the {MISO} inteference
channel,'' {\em IEEE Trans.\ Signal\ Process.}, vol.~56,
pp.~5292--5296, Oct.\ 2008.

\bibitem{JunWir08ISWCS}
V.~Jungnickel, T.~Wirth, M.~Schellmann, T.~Haustein, and W.~Zirwas,
``Synchronization of cooperative base stations,'' in {\em Proc. IEEE
International Symposium on Wireless Communication Systems},
pp.~329--334, 2008.

\bibitem{MudBro09CM}
R.~Mudumbai, D.~R.~Brown, U.~Madhow, and H.~V.~Poor, ``Distributed
transmit beamforming: {C}hallenges and recent progress,'' {\em
{IEEE} Communications Magazine}, vol.~47, pp.~102--110, Feb.~2009.

\bibitem{SesTou11}
S.~Sesia, I.~Toufik, and M.~Baker, {\em LTE, The UMTS Long Term Evolution: From Theory to Practice}.
\newblock Wiley, 2nd edition, 2011.

\bibitem{BoyVan04}
S.~Boyd and L.~Vandenberghe, {\em Convex Optimization}. Cambridge
University Press, 2004.

\bibitem{YuLui06TC}
W.~Yu and R.~Lui, ``Dual methods for nonconvex spectrum optimization
of multicarrier systems,'' {\em
  IEEE Trans.\ Commun.}, vol.~54, pp.~1310--1322, July\ 2006.

\bibitem{CenYu06TC}
R.~Cendrillon, W.~Yu, M.~Moonen, J.~Verlinden, and T.~Bostoen,
``Optimal multiuser spectrum balancing for digital subscriber
lines,'' {\em
  IEEE Trans.\ Commun.}, vol.~54, pp.~922--933, May\ 2006.

\bibitem{KnoHum95ICC}
R.~Knopp and P.~A.~Humblet, ``Information capacity and power control
in single-cell multiuser communications,'' in {\em Proc.\ IEEE Int.\ Conf.\ Commun.}, Seattle, WA, Jun. 1995.

\bibitem{TseHan98IT}
D.~N.~C.~Tse and S.~V.~Hanly, ``Multiaccess fading channels--Part I:
Polymatroid structure, optimal resource allocation and throughput
capacities,'' {\em IEEE Trans.\ Inform.\ Theory}, vol.~44,
pp.~2796--2815, Nov.\ 1998.

\end{thebibliography}
%

\end{document}